\numberwithin{equation}{section}
\theoremstyle{plain} 
\newtheorem{thm}{Theorem}[section]
\newtheorem{lem}[thm]{Lemma}
\newtheorem{pro}[thm]{Proposition}
\newtheorem{assumption}[thm]{Assumption}
\newtheorem{defn}[thm]{Definition}
\theoremstyle{remark}
\newtheorem{rem}[thm]{Remark}
\newcommand{\re}{\mathrm{Re}\,}
\newcommand{\im}{\mathrm{Im}\,}
\newcommand{\R}{{\mathbb R }}
\newcommand{\C}{{\mathbb C}}
\newcommand{\ii}{\mathrm{i}}
\newcommand{\deq}{\mathrel{\mathop:}=}
\newcommand{\dd}{\mathrm{d}}
\newcommand{\ie}{\emph{i.e., }}
\newcommand{\eg}{\emph{e.g., }}
\newcommand{\cf}{\emph{c.f., }}
\newcommand{\bs}{\boldsymbol}
\renewcommand{\mathbf}[1]{\bs{#1}}
\begin{document}

 \begin{minipage}{0.85\textwidth}
 \vspace{2.5cm}
 \end{minipage}
\begin{center}
\large\bf
On the support of the free additive convolution 
\end{center}

\renewcommand{\thefootnote}{\fnsymbol{footnote}}	
\vspace{1cm}
\begin{center}
 \begin{minipage}{0.32\textwidth}
\begin{center}
Zhigang Bao\footnotemark[1]  \\
\footnotesize {HKUST}\\
{\it mazgbao@ust.hk}
\end{center}
\end{minipage}
\begin{minipage}{0.32\textwidth}
\begin{center}
L\'aszl\'o Erd{\H o}s\footnotemark[2]  \\
\footnotesize {IST Austria}\\
{\it lerdos@ist.ac.at}
\end{center}
\end{minipage}
\begin{minipage}{0.33\textwidth}
 \begin{center}
Kevin Schnelli\footnotemark[3]\\
\footnotesize 
{KTH Royal Institute of Technology}\\
{\it schnelli@kth.se}
\end{center}
\end{minipage}
\footnotetext[1]{Supported in parts by  Hong Kong RGC Grant ECS 26301517.}
\footnotetext[2]{Supported in parts by ERC Advanced Grant RANMAT No.\ 338804.}
\footnotetext[3]{Supported in parts by  the G\"oran Gustafsson Foundation and the Swedish Research Council Grant VR-2017-05195.}

\renewcommand{\thefootnote}{\fnsymbol{footnote}}	

\end{center}

\vspace{1cm}

\begin{center}
 \begin{minipage}{0.83\textwidth}\footnotesize{
 {\bf Abstract.}  We consider the free additive convolution  of 
 two probability measures $\mu$ and~$\nu$ on the real line and show
 that $\mu\boxplus\nu$ is supported on a single interval if $\mu$ and $\nu$ each has
 single interval support.
 Moreover, the density of $\mu\boxplus\nu$ is proven to vanish as a square root near the edges of
 its support if both $\mu$ and $\nu$ have power law behavior with exponents between $-1$ and $1$
 near their edges. In particular, these results show the ubiquity of the conditions in our recent
 work on optimal local law at the spectral edges for addition of random matrices \cite{BES17}.
 
}
\end{minipage}
\end{center}

 \vspace{5mm}
 
 {\small
\footnotesize{\noindent\textit{Date}: October 27, 2018.}\\
\footnotesize{\noindent\textit{Keywords}: Free additive convolution, Jacobi measures.}\\
\footnotesize{\noindent\textit{AMS Subject Classification (2010)}: 46L54, 60B20, 30A99.}
 
 \vspace{2mm}

 }

\thispagestyle{headings}

\section{Introduction}

The classical convolution of two probability measures is of key interest in probability theory as it gives the law of the sum of two independent random variables. In analogy, Voiculescu~\cite{Voi86} introduced in free probability theory the free additive convolution. Let $\mu$ and $\nu$ be two Borel probability measures on the real line. Then the free additive convolution of $\mu$ and $\nu$, denoted $\mu\boxplus\nu$, is the law of $X+Y$ where $X$ and $Y$ are freely independent, self-adjoint, non-commutative random variables with laws $\mu$ and $\nu$. Though conceptually related, the classical convolution and the free convolution behave strikingly different. For example,
the classical convolution of two pure point measures is always pure point, while the free convolution always  has a non-vanishing absolutely continuous part. In particular, choosing $\mu=\nu$ as centered Bernoulli distribution, the free convolution $\mu\boxplus\mu$ is an absolutely continuous measure,
while the classical convolution is regularizing only in the sense that the $n$-fold convolution of $\mu$ becomes, upon rescaling, Gaussian in the limit of large $n$.
Note that the analogous central limit theorem for the free additive convolution yields Wigner's semicircle law in the limit.

 This basic example fittingly illustrates that, in contrast to the classical convolution, it is hard to infer, based upon intuition and heuristics, qualitative properties of the free additive convolution measure. Part of the reason is that there is no simple formula for the free additive convolution measure; it can 
 only be obtained as implicit solution to certain systems of equations. Thence the following seemingly simple question turns out quite difficult to answer: If $\mu$ and $\nu$ are both supported on a single interval, is $\mu\boxplus\nu$ then also supported on a single interval? Interestingly, while regularity properties of the free convolution measure have been extensively studied~\cite{BeV98,Bel1,Bel}, this natural problem apparently has not been studied
  in the literature. The  aim of this note is to answer this question in the affirmative for a large class of initial measures motivated by random matrix theory.
 
The emergence of the semicircle law indicates a strong link between free probability and random matrix theory. Voiculescu discovered in~\cite{Voi91} that random matrices not only provide examples of asymptotically free random variables, but can also be used to generate freeness. Conjugating symmetric matrices by independent Haar unitary matrices furnishes asymptotically free random variables. The prime example is the setup of the addition of two deterministic matrices 
in randomly chosen relative basis. In other words, if $A=A_N$ and $B = B_N$ are two sequences of deterministic Hermitian
matrices of size~$N$ and~$U$ is Haar distributed on the unitary group~$U(N)$, then~$A$ and $UBU^*$ are asymptotically free
in the limit of large~$N$ and the asymptotic eigenvalue distribution of $A + U BU^*$ is given
by the free additive convolution of the limiting eigenvalue distributions of~$A$ and~$B$. Given this convergence result, it is natural to ask about the speed of convergence and whether the convergence also holds on local scales. We have recently answered both questions  by deriving a so-called local law for the Green functions
\cf~\cite{BES17,BES16,BES15b}.
 However, the study of local laws for the model mentioned above crucially relies, in contrast to the global scale, on detailed regularity properties and the qualitative behavior of the deterministic free convolution measures.
 An objective of the present paper is to derive these decisive properties of the free convolution in the most relevant cases arising in random matrix theory.

We will focus on the free additive convolution of a class of  Jacobi type measures. These are measures
supported on a single interval with density behaving as a  power law  with exponent between -1 and 1 near the edges;
see Assumption~\ref{a.regularity of the measures} below. Wigner's semicircle law as well as the Marchenko--Pastur law are included in this class. Our main result, Theorem~\ref{le theorem 1}, asserts that the free additive convolution of two Jacobi type measures is supported on  a single interval and that its density vanishes as a square root at the two endpoints. These are the main conditions on the free convolution measure we required in our recent paper on the optimal local law at the spectral edges \cite{BES17}. 
 Theorem~\ref{le theorem 1} shows  that these assumptions  are natural.

The square root behavior at the edge is ubiquitous for densities arising in random matrix theory. The same phenomenon has been extensively
studied for  Wigner type random matrices \cite{AEK}, and more recently  for the underlying  Dyson
equation in a  general non-commutative  setup \cite{AEK18}. 
 Under some regularity condition on the matrix of variances, single interval support for the density has also been shown
\cite[Theorem 2.11]{AEK}, see also \cite[Corollary 9.4]{AEK18} for a generalization. Despite these similarities in the statements, the 
approach used for the Dyson equation is very different from the methods in the current work; simply the structures
of the underlying defining equations are not comparable.

Our proofs rely on methods from  function theory. Albeit  being introduced as an algebraic operation, the free additive convolution can be studied with complex analysis. The Stieltjes transform of the free additive convolution is related to the Stieltjes transforms of the original measures through analytic subordination. The existence of analytic subordination functions off the real line was observed in~\cite{Voi93,Bia98} and may directly be used to define the free additive convolution~\cite{BB,CG}; see Subsection~\ref{subsection: free addtive convolution} below. Function theory then provides powerful tools to study the free additive convolution and its regularization properties in great generality, that is,
 for very general Borel probability measures; see~\cite{BeV98,Bel1,Bel,BB-GG} and references therein. Specializing to Jacobi measures, we can analyze the boundary behavior of the subordination function on the real line and extract the qualitative behavior of the free convolution measure claimed in Theorem~\ref{le theorem 1}.

{\it Organization} of the paper: In Section~\ref{section main result}, we state our main results in detail, give the full definition of the free additive convolution and embed our paper in the literature. In Section~\ref{section 3}, we derive estimates on the Stieltjes transform of the free convolution and localize the subordination function on the real line. This information is then used in Section~\ref{section 4} to characterize regular edges and to prove the Theorem~\ref{le theorem 1}.

{\it Notation:} We use~$c$ and~$C$ to denote strictly positive constants. Their values may change from line to line. We denote by $\C^+$ the upper half-plane in $\C$, \ie $\C^+\deq\{z\in\C\,:\,\im z>0\}$.

{\it Acknowledgment:} We thank Serban Belinschi who provided insight and expertise related to the behavior of the subordination functions. We thank the anonymous referee for comments and help with the literature.

\section{Main results}\label{section main result}

Let $\mu_\alpha$ and $\mu_\beta$ be two Borel probability measures on $\R$. In this paper we 
  study support and regularity properties  of the free additive convolution measure, $\mu_\alpha\boxplus\mu_\beta$; 
  see Subsection~\ref{subsection: free addtive convolution} for the precise  definition of the free additive convolution. We focus on the case when  $\mu_\alpha$ and~$\mu_\beta$ are both absolutely continuous and have single interval support. Moreover, we assume that they are of {\it Jacobi} type 
  by which we mean that they vanish as  a power-law  at the edges of the support. More precisely, we will assume that $\mu_\alpha$ and $\mu_\beta$ satisfy the following.
\begin{assumption} \label{a.regularity of the measures} The measures $\mu_\alpha$ and $\mu_\beta$ are compactly supported and
centered probability measures that are absolutely continuous  with respect to Lebesgue measure with
 density functions $\rho_\alpha$ and $\rho_\beta$ such that:
\begin{itemize}
\item[$(i)$] Both density functions $\rho_\alpha$ and $\rho_\beta$ have single non-empty interval supports, $[E_-^\alpha,E_+^\alpha]$ and $[E_-^\beta,E_+^\beta]$, respectively.

\item[$(ii)$]  The density functions have a power law behavior: there are exponents $ -1<t_-^\alpha, t_+^\alpha<1$ and~ $-1<t_-^\beta,t_+^\beta<1$ such that 
\begin{align}\label{le constant C in measures}
C^{-1}\leq\frac{\rho_\alpha(x)}{(x-E_-^\alpha)^{t_-^\alpha}(E_+^\alpha-x)^{t_+^\alpha}}\leq C\,,\qquad \textrm{for a.e.}\quad	 x \in [E_-^\alpha, E_+^\alpha]\,,  \nonumber\\
C^{-1}\leq\frac{\rho_\beta(x)}{(x-E_-^\beta)^{t_-^\beta}(E_+^\beta-x)^{t_+^\beta}}\leq C\,,\qquad \textrm{for a.e.}\quad	 x \in [E_-^\beta, E_+^\beta]\,,
\end{align}
hold for some  positive constant $C>1$. 
\end{itemize} 
\end{assumption}

We will explain in Section~\ref{subsection: free addtive convolution} that under these conditions the free convolution measure $\mu_\alpha\boxplus \mu_\beta$
is known to have a continuous and bounded density $\rho$. 
Our following main result  shows that $\rho$ is supported on a single interval with square root singularities at the edges.

\begin{thm}\label{le theorem 1}
 Assume that $\mu_\alpha$ and $\mu_\beta$ satisfy Assumption~\ref{a.regularity of the measures}, in particular these measures have single interval support. 
 Then their free additive convolution $\mu_\alpha\boxplus \mu_\beta$ is also supported on a single compact interval that we denote by $[E_-, E_+]$. Moreover 
 $E_-<0$, $E_+>0$ and there exists $C>1$ such that	
 \begin{align}\label{eq theorem 1}
  C^{-1}\le\frac{\rho(x)}{\sqrt{x-E_-}\sqrt{E_+-x}} \le C\,,\qquad\qquad  \forall x\in[E_-,E_+]\,,
 \end{align}
where $\rho$ denotes the continuous  density function of $\mu_\alpha\boxplus\mu_\beta$.
\end{thm}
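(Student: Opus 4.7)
The plan is to exploit the subordination characterization of $\boxplus$. There exist analytic self-maps $\omega_\alpha,\omega_\beta\colon\C^+\to\C^+$ such that
\[
F_{\mu_\alpha}(\omega_\alpha(z))=F_{\mu_\beta}(\omega_\beta(z))=\omega_\alpha(z)+\omega_\beta(z)-z,\qquad z\in\C^+,
\]
where $F_\mu=1/m_\mu$ denotes the reciprocal Stieltjes transform. First I would show that under Assumption~\ref{a.regularity of the measures} the transforms $m_{\mu_\alpha},m_{\mu_\beta}$ extend H\"older continuously to $\overline{\C^+}$, and combine this with the fixed-point construction of $\omega_\alpha,\omega_\beta$ to extend these subordination functions continuously to $\overline{\C^+}$. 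The support of $\rho$ is then characterized as $\mathrm{supp}\,\rho=\{E\in\R:\im\omega_\alpha(E)>0\}=\{E\in\R:\im\omega_\beta(E)>0\}$: if $\omega_\alpha(E)\in\C^+$, the Jacobi assumption gives $\im m_{\mu_\alpha}(\omega_\alpha(E))>0$, hence $\rho(E)>0$; if $\omega_\alpha(E)\in\R$, the subordination identity and reality of $m_{\mu_\alpha\boxplus\mu_\beta}(E)$ force $\omega_\alpha(E)\in\R\setminus[E_-^\alpha,E_+^\alpha]$ and $\omega_\beta(E)\in\R\setminus[E_-^\beta,E_+^\beta]$, hence $\rho(E)=0$.

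Next I would parametrize $\R\setminus\mathrm{supp}\,\rho$ by $w=\omega_\alpha\in\R\setminus[E_-^\alpha,E_+^\alpha]$. On each component of the complement of its support, $F_{\mu_\alpha}$ is a strictly monotone real-analytic bijection, and likewise for $F_{\mu_\beta}$. Using the continuous extension of the subordination functions to single out the correct branch of the inverse, we obtain $\omega_\beta(w)=F_{\mu_\beta}^{-1}(F_{\mu_\alpha}(w))$ and define
\[
\PP(w):=w+F_{\mu_\beta}^{-1}\bigl(F_{\mu_\alpha}(w)\bigr)-F_{\mu_\alpha}(w),
\]
so that $z=\PP(w)$ on $\R\setminus\mathrm{supp}\,\rho$. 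An endpoint of $\mathrm{supp}\,\rho$ arises precisely at a critical point of $\PP$, since only at a local extremum of $\PP$ can the real branch of $\omega_\alpha$ bifurcate off $\R$ into $\C^+$. The single-interval conclusion therefore reduces to showing that $\PP$ admits \emph{exactly one} critical point on each of $(E_+^\alpha,\infty)$ and $(-\infty,E_-^\alpha)$.

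This uniqueness is the main obstacle. The plan is to compute
\[
\PP''(w)=F_{\mu_\alpha}''(w)\Bigl(\frac{1}{F_{\mu_\beta}'(\omega_\beta)}-1\Bigr)-\frac{F_{\mu_\alpha}'(w)^2\,F_{\mu_\beta}''(\omega_\beta)}{F_{\mu_\beta}'(\omega_\beta)^3},
\]
and to establish sign-definiteness of each factor. For any non-degenerate probability measure $\mu$, applying the Cauchy--Schwarz inequality to the integral representations of the Cauchy transform $G_\mu$ and of $G_\mu',G_\mu''$ yields the strict inequalities $F_\mu'>1$ and $F_\mu''<0$ on the unbounded right component of $\R\setminus\mathrm{supp}\,\mu$, with sign-flipped analogues on the left component. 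The Jacobi constraint $|t_\pm^{\alpha/\beta}|<1$ is used to rule out branch switching of $\omega_\beta$ as $w$ traverses $(E_+^\alpha,\infty)$ or $(-\infty,E_-^\alpha)$. Combining these facts gives $\PP''>0$ on $(E_+^\alpha,\infty)$ and $\PP''<0$ on $(-\infty,E_-^\alpha)$. Hence $\PP$ is strictly convex, respectively concave, on each component, and together with the centering asymptotics $\PP(w)=w+o(1)$ as $|w|\to\infty$ this forces exactly one critical point $w_\pm$ per component, yielding the two spectral edges $E_\pm=\PP(w_\pm)$.

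Finally, the non-degeneracy $\PP''(w_\pm)\neq 0$ yields the Taylor expansion $z-E_\pm=\tfrac12\PP''(w_\pm)(w-w_\pm)^2+O((w-w_\pm)^3)$, which inverts to $w-w_\pm=c_\pm\sqrt{z-E_\pm}\,(1+o(1))$ with $c_\pm\neq 0$. Substituting into $m_\rho(z)=m_{\mu_\alpha}(\omega_\alpha(z))$ and taking imaginary parts as $z\to E_\pm\pm\mathrm{i}0$ produces the two-sided square-root bound \eqref{eq theorem 1} near each edge; in the interior of $[E_-,E_+]$ the bound follows by continuity of $\rho$ and compactness. The signs $E_-<0<E_+$ follow from $\boxplus$ preserving the first moment, so that $\mu_\alpha\boxplus\mu_\beta$ is centered, together with the non-triviality of $\mu_\alpha$ and $\mu_\beta$ forcing non-triviality of their free additive convolution.
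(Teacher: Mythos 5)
Your overall architecture is close to the paper's: your map $\PP$ is exactly the function $\widetilde z$ of Proposition~\ref{prop. no cusps}, your criticality condition $\PP'(w)=0$ is equivalent to $(F'_{\mu_\alpha}-1)(F'_{\mu_\beta}-1)=1$, i.e.\ to the edge characterization of Proposition~\ref{prop. woises}, and your convexity-in-$w$ argument is a variant of the paper's monotonicity-in-$E$ argument (Lemma~\ref{le nu lemma} plus Proposition~\ref{prop. support is an interval}). However, there are genuine gaps. First, the step you treat as routine is where the hypothesis $|t^{\alpha,\beta}_\pm|<1$ actually does its work: you never show that at a vanishing point of $\rho$ the boundary values $\omega_\alpha(E),\omega_\beta(E)$ lie at a \emph{strictly positive distance} from $\mathrm{supp}\,\mu_\alpha$, $\mathrm{supp}\,\mu_\beta$. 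Without this separation (Lemma~\ref{gap corollary}, proved via the divergence of $\widehat I_{\alpha},\widehat I_\beta$ near the supports combined with $\widehat I_\alpha\widehat I_\beta\le 1$, see \eqref{smaller than one} and \eqref{le gap}, which in turn needs the identity $\mathrm{supp}\,\widehat\mu_\alpha=\mathrm{supp}\,\mu_\alpha$ of Lemma~\ref{lemm on hat measures}), your critical point could sit at the endpoint $E^\alpha_+$ itself, the branch of $F^{(-1)}_{\mu_\beta}\circ F_{\mu_\alpha}$ need not be analytic there, and the nondegenerate quadratic expansion -- hence the square root -- can fail; this is precisely the scenario when an exponent reaches $1$. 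Relatedly, your domain $(E^\alpha_+,\infty)$ for $\PP$ is not automatically the full component: when $t^\beta_+>0$ the range of $F_{\mu_\beta}$ on $(E^\beta_+,\infty)$ is a proper half-line, so $\PP$ is only defined on a subinterval, and one must know the actual subordination values stay inside it.

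Second, two asserted equivalences are not proved and are not trivial: (i) ``endpoints of $\mathrm{supp}\,\rho$ arise precisely at critical points of $\PP$'' is the technical core of the paper (Proposition~\ref{prop. woises}), whose proof needs the separation lemma, the limits $\eta/\im\omega_{\alpha,\beta}(E+\ii\eta)\to 0$, and an approach from the side where $\rho>0$; your one-line bifurcation heuristic does not substitute for it, and without it your ``continuity and compactness'' argument only gives the upper bound of \eqref{eq theorem 1} in the bulk, not the lower bound (one must exclude interior zeros of $\rho$). (ii) The exclusion of the ``mixed'' configurations \eqref{le condi1}--\eqref{le condi2}, where $\omega_\alpha$ and $\omega_\beta$ lie on opposite sides of their respective supports (on which $\PP''$ is \emph{not} sign-definite), has nothing to do with the Jacobi exponents, to which you attribute it; it follows from the sign of $F_{\mu}$ off a single-interval support together with $F_{\mu_\alpha}(\omega_\beta)=F_{\mu_\beta}(\omega_\alpha)$, and this short argument must be supplied for the convexity route to close. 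Finally, your opening claim that $m_{\mu_\alpha},m_{\mu_\beta}$ extend H\"older continuously to $\overline{\C^+}$ is false under Assumption~\ref{a.regularity of the measures} when an exponent is negative (the Stieltjes transform blows up at that edge); the continuous extension of $\omega_\alpha,\omega_\beta$ should instead be taken from Belinschi's general result (Proposition~\ref{prop extension}), as the paper does.
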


\begin{rem}\label{rem:parameters}  It can be checked from our proofs that the constant $C$ in~\eqref{eq theorem 1} depends 
only  on certain {\it control parameters}, namely on the constant in~\eqref{le constant C in measures}, on the exponents in~\eqref{le constant C in measures}, the second moments of $\mu_\alpha$ and $\mu_\beta$, and on
 the constant $c>0$ serving as a lower bound in
\begin{align}\label{the minis}
  |m_{\mu_\alpha}(E)| \ge c \,,\qquad 
  |m_{\mu_\beta}(E')|\ge c 
\end{align}
for all $E\in [ E_-^\alpha-1,E_-^\alpha]\cup[E_+^\alpha,E_+^\alpha+1]$ and for all $E'\in[E_-^\beta-1, E_-^\beta]\cup[E_+^\beta,E_+^\beta+1]$. 

 Notice that the Stieltjes transforms of $\mu_\alpha$ and $\mu_\beta$ can be extended  as non-tangential limits to the real axis Lebesgue-almost everywhere. Outside the support of the measure~$\mu_\alpha$, respectively $\mu_\beta$, these extensions are real valued and analytic. Since $\mu_\alpha$ and $\mu_\beta$ have single interval supports, the Stieltjes transforms cannot have any zeros outside their supports. In particular, there is indeed a positive
lower bound $c$ in \eqref{the minis} on the indicated intervals.

In general, we apply  a similar convention throughout paper: when 
we  state that some constant  depends on the two input measures $\mu_\alpha$ and $\mu_\beta$, we mean that
it depends only on the above  control parameters.

\end{rem}

\begin{rem}
The assumption that the exponents  $t_-^\alpha, t_+^\alpha,t_-^\beta$ and $t_+^\beta$ in \eqref{le constant C in measures} are bigger than $-1$ is 
 necessary to have finite measures. 
In general, the assumption  that they are smaller than $1$ is necessary to have a square root behavior at the edges of the free convolution. Indeed, 
if one of the exponents exceeds $1$, it can happen that an edge behavior other than the square root emerges; see~\cite{LS13,LS16bis} for a detailed analysis of 
a special case  when one of the measures is a the semicircle law and the other has a convex behavior at the endpoints of the supports. However, we still expect that the free additive convolution of two Jacobi
 measures with general exponents is supported on a single interval. We point out that in most applications
 the endpoint exponents are strictly below 1, so our theorem applies.
\end{rem}

\subsection{Free additive convolution}\label{subsection: free addtive convolution}
In this subsection we review the definition of the free additive convolution in detail. We start with the Stieltjes transform:
For any probability measure $\mu$ on $\R$, its Stieltjes transform is defined as 
\begin{align*}
m_\mu(z)\deq\int_{\mathbb{R}} \frac{1}{x-z}\, \dd\mu(x)\,, \qquad \qquad z\in \mathbb{C}^+\,.
\end{align*}
We further denote by $F_\mu$ the negative reciprocal Stieltjes transform of $\mu$, i.e.,
\begin{align*}
F_\mu(z)\deq-\frac{1}{m_\mu(z)}\,.
\end{align*}
Note that $F_{\mu}\,:\C^+\rightarrow\C^+$ is analytic and satisfies
\begin{align*}
 \lim_{\eta\nearrow\infty}\frac{F_{\mu}(\ii\eta)}{\ii\eta}=1\,.
\end{align*}

Conversely, if $F\,:\,\C^+\rightarrow\C^+$ is an analytic function with
$\lim_{\eta\nearrow\infty} F(\ii\eta)/\ii\eta=1$, then $F$ is the negative reciprocal Stieltjes transform of a probability
measure $\mu$, \ie $F(z) = F_{\mu}(z)$, for all $z\in\C^+$; see \eg~\cite{Ak}.

Voiculescu introduced the free additive convolution of Borel probability measures on $\R$ in the groundbreaking paper~\cite{Voi86} in an algebraic setup as the distribution of the sum of two freely independent non-commutative random variables. Our starting point is the following result which can be used to define the free additive convolution in an analytic setup.

\begin{pro}[Theorem 4.1 in~\cite{BB}, Theorem~2.1 in~\cite{CG}]\label{le prop 1}
Given two Borel probability measures, $\mu_\alpha$ and $\mu_\beta$, on $\R$, there exist unique analytic functions, $\omega_\alpha,\omega_\beta\,:\,\C^+\rightarrow \C^+$, such that,
 \begin{itemize}[noitemsep,topsep=0pt,partopsep=0pt,parsep=0pt]
  \item[$(i)$] for all $z\in \C^+$, $\im \omega_\alpha(z),\,\im \omega_\beta(z)\ge \im z$, and
  \begin{align}\label{le limit of omega}
  \lim_{\eta\nearrow\infty}\frac{\omega_\alpha(\ii\eta)}{\ii\eta}=\lim_{\eta\nearrow\infty}\frac{\omega_\beta(\ii\eta)}{\ii\eta}=1\,;
  \end{align}
  \item[$(ii)$] for all $z\in\C^+$, 
  \begin{align}\label{le definiting equations}
   F_{\mu_\alpha}(\omega_{\beta}(z))=F_{\mu_\beta}(\omega_{\alpha}(z))\,,\qquad\qquad \omega_\alpha(z)+\omega_\beta(z)-z=F_{\mu_\alpha}(\omega_{\beta}(z))\,.
  \end{align}
 \end{itemize}
\end{pro}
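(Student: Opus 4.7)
The plan is to reduce the existence and uniqueness of $(\omega_\alpha,\omega_\beta)$ to a single scalar fixed-point problem in $\C^+$ and then invoke the Denjoy--Wolff theorem. First I would introduce the \emph{$h$-transform} $h_\mu(w)\deq F_\mu(w)-w$. Applying the Cauchy--Schwarz inequality to $m_\mu(w)=\int(x-w)^{-1}\,\dd\mu(x)$ gives $\im F_\mu(w)\ge\im w$ for $w\in\C^+$, so $h_\mu$ is analytic from $\C^+$ into $\overline{\C^+}$. The system \eqref{le definiting equations} is equivalent to the pair
\begin{equation*}
\omega_\alpha(z)=z+h_{\mu_\alpha}(\omega_\beta(z)),\qquad \omega_\beta(z)=z+h_{\mu_\beta}(\omega_\alpha(z)),
\end{equation*}
so it suffices to find a unique $\omega_\alpha(z)\in\C^+$ satisfying $\omega_\alpha=g_z(\omega_\alpha)$, where
\begin{equation*}
g_z(w)\deq z+h_{\mu_\alpha}\!\bigl(z+h_{\mu_\beta}(w)\bigr),
\end{equation*}
and then to define $\omega_\beta(z)\deq z+h_{\mu_\beta}(\omega_\alpha(z))$.

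From $\im h_{\mu_\alpha},\im h_{\mu_\beta}\ge 0$ one checks directly that $g_z\colon\C^+\to\C^+$ with $\im g_z(w)\ge\im z$ for every $w\in\C^+$. Hence $g_z(\C^+)\subseteq\{w:\im w\ge\im z\}\subsetneq\C^+$, so $g_z$ is not a M\"obius automorphism of $\C^+$. By Schwarz--Pick it is then a strict contraction of the Poincar\'e metric, which gives uniqueness at once: two distinct interior fixed points of such a map are impossible.

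For existence I would apply the Denjoy--Wolff theorem: any holomorphic self-map of $\C^+$ that is not an elliptic automorphism possesses a unique \emph{Denjoy--Wolff point} $p\in\overline{\C^+}\cup\{\infty\}$ to which the iterates $g_z^{(n)}$ converge locally uniformly. I then rule out boundary values of $p$. A finite real $p\in\R$ is impossible because $\im g_z^{(n)}(w)\ge\im z>0$ for all $n$ and $w$. To exclude $p=\infty$ I would use the Nevanlinna representation of $F_\mu$, equivalently that $F_\mu(w)=w+o(|w|)$ in every Stolz cone $\{|\re w|\le M\,\im w\}$, which forces $h_\mu(w)=o(|w|)$ there; iterating $g_z$ keeps the orbit inside a fixed Stolz cone (since $\im g_z\ge\im z$) and makes $g_z$ bounded on that cone, so the orbit cannot escape to $\infty$. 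Therefore $p\in\C^+$ is the desired interior fixed point $\omega_\alpha(z)$. Joint analyticity in $z$ follows by the implicit function theorem applied to $G(z,w)\deq w-g_z(w)$; the strict contraction property ensures $\partial_w G(z,\omega_\alpha(z))\neq 0$.

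The normalization \eqref{le limit of omega} is verified by specializing $z=\ii\eta$ in $\omega_\alpha=\ii\eta+h_{\mu_\alpha}(\omega_\beta)$: one first shows $\omega_\beta(\ii\eta)\to\infty$ non-tangentially as $\eta\nearrow\infty$ (otherwise the right-hand side has bounded imaginary part, contradicting $\im\omega_\alpha\ge\eta$), and then the cone estimate $h_{\mu_\alpha}(w)=o(|w|)$ yields $\omega_\alpha(\ii\eta)/\ii\eta\to 1$; the argument for $\omega_\beta$ is symmetric. The main obstacle I anticipate is the exclusion of the Denjoy--Wolff point at infinity: it requires quantitative control on $F_\mu$ near infinity through the Nevanlinna representation rather than the bare asymptotic $F_\mu(\ii\eta)/\ii\eta\to 1$, and is the only step that uses more about $F_\mu$ than the inequality $\im F_\mu\ge\im(\cdot)$.
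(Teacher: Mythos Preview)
The paper does not give its own proof of this proposition: it is quoted from the literature (Theorem~4.1 in~\cite{BB} and Theorem~2.1 in~\cite{CG}) and used as a black box. Your outline is in fact precisely the Denjoy--Wolff strategy of~\cite{BB}: rewrite~\eqref{le definiting equations} via the $h$-transforms as a single fixed-point equation $\omega_\alpha=g_z(\omega_\alpha)$, observe that $g_z$ is a non-automorphic self-map of $\C^+$ because its range lies in $\{\im w\ge\im z\}$, and appeal to Denjoy--Wolff. So there is nothing to compare against in the present paper, and your approach matches the cited source.

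The one place your sketch is genuinely loose is the exclusion of $\infty$ as Denjoy--Wolff point. The sentence ``iterating $g_z$ keeps the orbit inside a fixed Stolz cone (since $\im g_z\ge\im z$)'' does not follow: a lower bound on the imaginary part says nothing about $|\re w|/\im w$ along the orbit, and $h_\mu(w)=o(|w|)$ is only available non-tangentially. The clean way to finish is to invoke the Julia--Wolff--Carath\'eodory characterization: $\infty$ is the Denjoy--Wolff point of a self-map $g$ of $\C^+$ if and only if the angular limit $\lim_{\eta\to\infty} g(\ii\eta)/(\ii\eta)\ge 1$. Here one computes directly from the Nevanlinna representation that $h_{\mu_\beta}(\ii\eta)=o(\eta)$, hence $z+h_{\mu_\beta}(\ii\eta)=o(\eta)$, and then $g_z(\ii\eta)=z+h_{\mu_\alpha}(z+h_{\mu_\beta}(\ii\eta))=o(\eta)$ as well, so $g_z(\ii\eta)/(\ii\eta)\to 0$, ruling out $\infty$. (One must also treat separately the degenerate case where one of the measures is a point mass, since then $h_\mu$ is a real constant and the Schwarz--Pick argument for strict contraction needs a word; in that case the system is solved explicitly.) With this adjustment your argument is complete.
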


The analytic function $F\,:\,\C^+\rightarrow \C^+$ defined by
\begin{align}\label{le kkv}
 F(z)\deq F_{\mu_\alpha}(\omega_{\beta}(z))=F_{\mu_\beta}(\omega_{\alpha}(z))\,,
\end{align}
 is by part $(i)$ of Proposition~\ref{le prop 1} the negative reciprocal Stieltjes transform of a probability measure $\mu$, called the free additive convolution of $\mu_\alpha$ and $\mu_\beta$ and denoted by $\mu\equiv\mu_\alpha\boxplus\mu_\beta$. The functions $\omega_\alpha$ and $\omega_\beta$ are referred to as the {\it subordination functions}. The subordination phenomenon
was first noted by Voiculescu~\cite{Voi93} in a generic situation
and in full generality by Biane~\cite{Bia98}.

Choosing $\mu_\alpha$ arbitrary and $\mu_\beta$ as delta mass at $x\in \R$, it is easy to check that $\mu_\alpha\boxplus \mu_\beta  $ simply is $\mu_\alpha$ shifted by $x$. We therefore exclude this trivial case from our considerations. Moreover, by a simple shift we may without lost of generality assume that $\mu_\alpha$ and $\mu_\beta$ are centered measures; see~Assumption~\ref{a.regularity of the measures}.

The atoms of $\mu_\alpha\boxplus\mu_\beta$ are determined as follows.  A point $w\in\R$ is an atom of $\mu_\alpha\boxplus\mu_\beta$ if and only if there exist $x,y\in\R$ such that $w=x+y$ and $\mu_\alpha(\{x\})+\mu_\beta(\{y\})>1$; see [Theorem~7.4,~\cite{BeV98}]. Thus in particular under Assumption~\ref{a.regularity of the measures}, the free additive convolution does not have any atoms. The boundary behavior of the subordination functions~$\omega_\alpha$ and~$\omega_\beta$ was studied by Belinschi in a series of papers~\cite{Bel1,Bel,Bel2} where he proved the following two results. For sake of simplicity, we limit the discussion to compactly supported measures.

\begin{pro}[Theorem~3.3 in~\cite{Bel}, Theorem~6 in~\cite{Bel2}]\label{prop extension}
 Let $\mu_\alpha$ and $\mu_\beta$ be compactly supported Borel probability measures on $\R$, none of them being a single point mass. Then the subordination functions  $\omega_\alpha$, $\omega_\beta\,:\, \C^+\to\C^+$ extend continuously to $\C^+\cup\R$ as functions with values in $\C^+\cup \R\cup \{\infty\}$.
 \end{pro}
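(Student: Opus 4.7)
My plan is to realize $\omega_\alpha$ as the attractive fixed point of an analytic self-map of $\C^+$ depending continuously on the parameter $z$, and to extend this fixed point description continuously to $z \in \R$ using the Denjoy-Wolff theorem together with the Julia-Carath\'eodory boundary regularity of Nevanlinna-Pick functions.

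Eliminating $\omega_\beta$ from the subordination identities of Proposition~\ref{le prop 1} gives, with $H_\mu(w) \deq F_\mu(w) - w$, the single fixed point equation
\begin{equation}\label{eq:myfp}
 \omega_\alpha(z) = T_z(\omega_\alpha(z))\,, \qquad T_z(w) \deq H_{\mu_\alpha}\bigl(H_{\mu_\beta}(w) + z\bigr) + z\,.
\end{equation}
Since neither $\mu_\alpha$ nor $\mu_\beta$ is a point mass, the Nevanlinna representation $F_\mu(w) = w + b_\mu + \int \frac{1+tw}{t-w} \, \dd\sigma_\mu(t)$ holds with $\sigma_\mu \neq 0$, so $\im H_\mu(w) > 0$ for $w \in \C^+$ and $T_z$ is an analytic self-map of $\C^+$ that is not a M\"obius transformation. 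By the Denjoy-Wolff theorem, $T_z$ therefore has a unique attracting fixed point in $\C^+$, necessarily equal to $\omega_\alpha(z)$ by the uniqueness in Proposition~\ref{le prop 1}, and $T_z$ acts as a strict contraction in the hyperbolic metric on $\C^+$.

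To extend $\omega_\alpha$ continuously to $x \in \R$, I would analyze $T_x$ as a boundary map. The Julia-Carath\'eodory theorem applied to $F_{\mu_\alpha}$ and $F_{\mu_\beta}$ provides at every real point an angular limit in $\R \cup \{\infty\}$ together with a well-defined (possibly infinite) angular derivative, so that $T_x$ extends continuously as a map $\C^+ \cup \R \cup \{\infty\} \to \C^+ \cup \R \cup \{\infty\}$ whose dynamics is still governed by the Denjoy-Wolff alternative. For any sequence $z_n \to x$ with $z_n \in \C^+$, the values $\omega_\alpha(z_n)$ form a relatively compact sequence in the Riemann sphere, and by passing to the limit in~\eqref{eq:myfp} every accumulation point $\omega_\alpha^*$ satisfies the boundary fixed point equation $\omega_\alpha^* = T_x(\omega_\alpha^*)$. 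Uniqueness of the accumulation value, and hence continuity, then follows from a case analysis: if the Denjoy-Wolff point of $T_x$ lies in $\C^+$, standard perturbation theory of strict hyperbolic contractions applies; if it lies on the boundary, Julia's lemma supplies nested horocycles that trap $\omega_\alpha(z_n)$ and force convergence to a single point in $\R \cup \{\infty\}$.

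The main obstacle I expect is the boundary regime, where the hyperbolic contraction of $T_z$ degenerates as $z \to x$ and several candidate boundary fixed points of $T_x$ could a priori coexist. Ruling out multiple accumulation values of $\omega_\alpha(z_n)$ relies on the rigidity supplied by Julia-Carath\'eodory: the angular derivative of $T_x$ at a boundary fixed point is at most one, and any limiting pair $(\omega_\alpha^*, \omega_\beta^*)$ must still satisfy the boundary version of~\eqref{le definiting equations}, which together with the normalization~\eqref{le limit of omega} pins down a unique admissible solution. The corresponding statement for $\omega_\beta$ then follows immediately from the companion identity $\omega_\beta(z) = H_{\mu_\beta}(\omega_\alpha(z)) + z$, whose right-hand side inherits continuity on $\C^+ \cup \R$ from the extensions of $F_{\mu_\beta}$ and of $\omega_\alpha$.
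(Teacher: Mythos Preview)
The paper does not prove this proposition at all; it is simply quoted from Belinschi's work (Theorem~3.3 in~\cite{Bel} and Theorem~6 in~\cite{Bel2}) and used as a black box. Your outline is therefore not to be compared against anything in the present paper, but rather against those references, and in fact your strategy---rewrite the subordination system as the fixed point equation $\omega_\alpha(z)=T_z(\omega_\alpha(z))$ with $T_z=H_{\mu_\alpha}\circ(H_{\mu_\beta}+z)+z$, identify $\omega_\alpha(z)$ with the Denjoy--Wolff point of $T_z$, and pass to the boundary via Julia--Carath\'eodory---is precisely the route taken by Belinschi and by Belinschi--Bercovici~\cite{BB}. So at the level of approach you are on target.

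Where your sketch falls short is exactly the place you flag yourself. For $x\in\R$ the map $T_x$ is still an analytic self-map of $\C^+$, and it may well have several boundary fixed points; only the Denjoy--Wolff point has angular derivative $\le 1$, the others have angular derivative $>1$. Your proposed mechanism for singling out a unique accumulation value---``the normalization~\eqref{le limit of omega} pins down a unique admissible solution''---does not work: \eqref{le limit of omega} is an asymptotic at $\ii\eta\to\infty$ and carries no information about which boundary fixed point of $T_x$ the sequence $\omega_\alpha(z_n)$ selects as $z_n\to x$. What is actually needed is a continuity statement for the Denjoy--Wolff point as the map varies (a result of Heins, used in~\cite{Bel}), or equivalently a horocycle argument that is uniform in the parameter $z$ near $x$; Julia's lemma applied only to the limit map $T_x$ is not enough. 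Until you supply that uniformity, the boundary case remains a genuine gap.
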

 
In Theorem~4.1 of~\cite{Bel}, Belinschi proved that the singular continuous part of $\mu_\alpha\boxplus\mu_\beta$ is always zero, and that the absolutely continuous part of $\mu_\alpha\boxplus\mu_\beta$ is always nonzero and admits a continuous density function. We denote this density function by $\rho$. Summing up, we have under Assumption~\ref{a.regularity of the measures} the following regularity result:
\begin{lem}
 Let $\mu_\alpha$ and $\mu_\beta$ satisfy Assumption~\ref{a.regularity of the measures}. Then the free additive convolution measure $\mu_\alpha\boxplus\mu_\beta$ is absolutely continuous with respect to Lebesgue measure and admits a continuous and bounded density function $\rho$ that is real analytic wherever strictly positive.
\end{lem}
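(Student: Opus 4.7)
The plan is to assemble the already cited results of Belinschi with a short ad hoc argument for the real analyticity. First, to rule out atoms, I would apply the criterion \cite[Theorem~7.4]{BeV98} recalled in the paragraph before Proposition~\ref{prop extension}: a point $w\in\R$ is an atom of $\mu_\alpha\boxplus\mu_\beta$ only if $w=x+y$ for some $x,y\in\R$ with $\mu_\alpha(\{x\})+\mu_\beta(\{y\})>1$. Under Assumption~\ref{a.regularity of the measures} both $\mu_\alpha$ and $\mu_\beta$ are absolutely continuous, hence atomless, so no such $x,y$ exist and $\mu_\alpha\boxplus\mu_\beta$ has no atoms. Combined with Theorem~4.1 of~\cite{Bel}---asserting that the singular continuous part of $\mu_\alpha\boxplus\mu_\beta$ vanishes and that its absolutely continuous part carries a continuous density---this yields absolute continuity of $\mu_\alpha\boxplus\mu_\beta$ together with continuity of its density $\rho$.

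Boundedness of $\rho$ is then automatic: the free convolution of two compactly supported measures is compactly supported, since the corresponding free self-adjoint variables are bounded in operator norm and hence so is their sum. Thus $\rho$ is continuous on $\R$ and vanishes outside a compact set, so it is bounded.

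For real analyticity of $\rho$ on $\{x\in\R:\rho(x)>0\}$, the plan is to exploit the subordination identity~\eqref{le kkv}, i.e.\ $m_{\mu_\alpha\boxplus\mu_\beta}(z)=-1/F_{\mu_\alpha}(\omega_\beta(z))$, together with the boundary extension furnished by Proposition~\ref{prop extension}. Fix $x_0\in\R$ with $\rho(x_0)>0$, so that $\im m_{\mu_\alpha\boxplus\mu_\beta}(x_0+\ii 0)>0$, whence $F_{\mu_\alpha}(\omega_\beta(x_0+\ii 0))\in\C^+$. Since $F_{\mu_\alpha}$ extends real analytically and with real values to $\R\setminus\mathrm{supp}\,\mu_\alpha$, this forces $\omega_\beta(x_0+\ii 0)\in\C^+$; symmetrically $\omega_\alpha(x_0+\ii 0)\in\C^+$. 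Viewing the subordination system~\eqref{le definiting equations} in a complex neighborhood of $(x_0,\omega_\alpha(x_0+\ii 0),\omega_\beta(x_0+\ii 0))$, the maps $F_{\mu_\alpha}$ and $F_{\mu_\beta}$ are analytic there, so by the implicit function theorem one obtains unique analytic solutions $\omega_\alpha(z),\omega_\beta(z)$ on a two-sided complex neighborhood of $x_0$ that agree with the original subordination functions on $\C^+$. Consequently $m_{\mu_\alpha\boxplus\mu_\beta}$ extends analytically across $x_0$, and Stieltjes inversion gives $\rho=\frac{1}{\pi}\im m_{\mu_\alpha\boxplus\mu_\beta}$ real analytic near $x_0$.

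I expect the main subtlety to lie in the non-degeneracy hypothesis for the implicit function theorem applied to~\eqref{le definiting equations} at $x_0$: if the Jacobian of the relevant system vanishes, one has to argue differently, e.g.\ directly via a suitable Schwarz reflection of $\omega_\beta$ around its value $\omega_\beta(x_0+\ii 0)\in\C^+$, or by invoking the more refined analyses of Belinschi in~\cite{Bel}. Modulo this point, the proof is a direct assembly of previously established results.
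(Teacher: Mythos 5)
Your first three steps coincide with what the paper actually does: the lemma is justified there by citation, namely no atoms via \cite[Theorem~7.4]{BeV98} and absence of a singular continuous part plus continuity of the density via \cite[Theorem~4.1]{Bel} (which also contains the local real analyticity where the density is positive), with boundedness indeed immediate from continuity and compact support. So up to that point your proposal is the paper's argument.

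The ad hoc argument you substitute for the real analyticity, however, has a genuine gap at its first step. From $\rho(x_0)>0$ you get $\im F_{\mu_\alpha}(\omega_\beta(x_0+\ii 0))>0$, but the fact that $F_{\mu_\alpha}$ is real valued on $\R\setminus\mathrm{supp}\,\mu_\alpha$ only excludes the possibility that $\omega_\beta(x_0+\ii 0)$ is a real point \emph{outside} $\mathrm{supp}\,\mu_\alpha$; it does not exclude $\omega_\beta(x_0+\ii 0)\in\mathrm{supp}\,\mu_\alpha$, where the boundary values of $F_{\mu_\alpha}$ typically do have positive imaginary part. In that case $F_{\mu_\alpha}$ need not continue analytically (under Assumption~\ref{a.regularity of the measures} the densities are only comparable to power laws, not analytic), and your implicit function theorem setup breaks down. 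Ruling this case out is exactly the content of the separation estimate proved later in the paper (Lemma~\ref{gap corollary}, via the divergence of $\widehat I_\alpha$ near $\mathrm{supp}\,\mu_\alpha$ combined with the bound \eqref{smaller than one}); it is not a formal consequence of subordination, and for general measures the boundary value of a subordination function can indeed lie inside the other measure's support at a point of positive density. In addition, the non-degeneracy of the Jacobian for \eqref{le definiting equations}, which you flag but leave open, does need an argument: it follows from strict inequality in the Cauchy--Schwarz step of \eqref{smaller than one}-type bounds once $\im\omega_\alpha(x_0),\im\omega_\beta(x_0)>0$ (cf.\ the equality discussion in Proposition~\ref{prop. woises}), whereas your proposed fallback of a Schwarz reflection of $\omega_\beta$ around a value in $\C^+$ is not meaningful, since reflection requires real (or analytic-arc) boundary values. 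Either supply these two inputs, or simply cite Belinschi's Theorem~4.1 in \cite{Bel} for the analyticity, as the paper does.
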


\subsection{Previous results}
We start by results concerning the supports of free convolution measures. Biane studied in~\cite{B} the free convolution of the semicircle law with an arbitrary probability measure: Denote by $\sigma_t(\dd x)=\frac{1}{2\pi t}\sqrt{4t-x^2}\,1_{[-2\sqrt{t},2\sqrt{t}]}(x)\dd x$ the density of the semicircle law of variance $t$. Given a probability measure $\lambda$ on $\R$, we obtain a one-parameter family of probability measures, the so-called semi-circular flow, by setting $\mu_t\deq \lambda\boxplus\sigma_t$, $t>0$. Biane proved that the number of connected components of $\mu_t$ is a non-increasing function of $t$ and that the continuous density of $\mu_t$ satisfies
\begin{align}
 |\mu_t(x)|\le \Big(\frac{3}{4\pi^3t^2|x-x_0|}\Big)^{1/3}\,,\qquad\qquad x\in\R\,,
\end{align}
where $x_0$ is the closed point to $x$ in the complement of the interior of the support of $\mu_t$. For further results of the semi-circular flow we refer to~\cite{ShchT11}. 

In the appendix to~\cite{BPB99}, Biane obtained support properties of freely stable laws. 

Voiculescu proved in~\cite{Voi85} the free central limit theorem for the addition of freely independent non-commutative random variables in terms of convergence of moments. The convergence to the limiting semicircle distribution turned out to be much stronger: already after a finite number of free convolutions, the distribution of the finite free sum becomes absolutely continuous with respect to Lebesgue measure. This so-called superconvergence was established first by Bercovici and Voiculescu in~\cite{BeV95} and subsequently refined by Kargin~\cite{Kargin2007} and Wang~\cite{Wan10}.

 The $n$-fold free convolution power $\lambda^{\boxplus n}$ of a probability measure $\lambda$ on $\R$ can be embedded in a one-parameter family $\{\lambda^{\boxplus t},\,t\ge 1\}$ with the semigroup structure $\lambda^{\boxplus t_1}\boxplus\lambda^{\boxplus t_2}=\lambda^{\boxplus(t_1+t_2)}$, $t_1,t_2\ge 1$;~see~\cite{BeV95,NiSp96}. Huang proved in~\cite{Hua14} that the support of $\lambda^{\boxplus t}$, $t>1$, consists of at most finitely many atoms and countably many intervals and that the number of the components of the support of $\lambda^{\boxplus t}$ is a decreasing function of $t$. We mention that the system of subordination equations in both cases, the semicircular flow and the free convolution semigroup, reduce to a single equation rendering the support analysis much simpler.

The results in~\cite{BeV95,B,Hua14,Kargin2012,Wan10} seemingly suggest that convolving freely reduces the number of connected components in the support. Yet superconvergence results to freely stable laws in~\cite{BWZ17} and results in~\cite{BeW2008} might betoken that the situation is in general not quite as clear.

For the free addition of two Jacobi measures, Olver and Rao proved in~\cite{ON} that if $\mu_\alpha$ is a Jacobi measure with $t_\pm^\alpha=1/2$ and $\mu_\beta$ is a Jacobi measure whose Stieltjes transform is single-valued, then $\mu_\alpha\boxplus\mu_\beta$ is a Jacobi measure which exhibits a square root behavior at its edges.

In~\cite{BES17}, Section 3, we studied the behavior of free additive convolution at the smallest and largest endpoints of its support. We showed that under similar conditions  to the current Assumption~\ref{a.regularity of the measures} and the additional assumption that at least one of the  following two bounds 
\begin{equation}\label{mbound}
  \sup_{z\in \C^+}|m_{\mu_\alpha}(z)|\le C\,, \qquad\qquad   \sup_{z\in \C^+}|m_{\mu_\beta}(z)|\le C\,,
\end{equation}
holds, for some positive constant $C$, that $\mu_\alpha\boxplus\mu_\beta$ vanishes as a square root at the smallest and largest endpoint of its support. Theorem~\ref{le theorem 1} overpasses these results by removing the unnatural assumption in~\eqref{mbound}, but more importantly, it asserts that the free additive convolution, under Assumption~\ref{a.regularity of the measures}  has only two edges, \ie $\mu_\alpha\boxplus\mu_\beta$ is supported on a single interval, and its density is strictly positive inside the support.

Finally, we mention that the linear stability of the system~\eqref{le definiting equations} was first effectively studied by Kargin in~\cite{Kargin2012} under some genericity conditions. In~\cite{BES15}, we showed that these conditions are fulfilled in the regular bulk and in~\cite{BES17} we extended the stability results to square root edges where the system is only quadratically stable.

\section{Properties of the Stieltjes transform and the subordination functions}\label{section 3}

We will repeatedly use the following integral representation for Pick functions; see \eg Chapter III of~\cite{Ak} for a reference.

\begin{lem}\label{lemma pick}
 Let $f\,:\,\C^+\rightarrow \C^+\cup\R$ be an analytic function. Then there exists a positive Borel measure $\mu$ on $\R$ and $a\in\R$, $b\ge 0$ such that
 \begin{align}\label{le pick}
  f(\omega)=a+b\,\omega+\int_\R\left(\frac{1}{x-\omega}-\frac{x}{1+x^2}\right)\dd\mu(x)\,,\qquad\quad \omega\in\C^+\,, 
 \end{align}
and 
\begin{align}
 \int_\R\frac{1}{1+x^2}\,\dd\mu(x)<\infty\,.
\end{align}

\end{lem}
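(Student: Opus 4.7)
The plan is to derive the formula from the Poisson--Herglotz representation of non-negative harmonic functions on the upper half plane, and then integrate this representation back to an analytic identity for $f$ itself. Since $f$ maps $\C^+$ into $\C^+\cup\R$, its imaginary part $u(x+\ii y)\deq \Im f(x+\ii y)$ is non-negative and harmonic on $\C^+$; the entire argument is driven by this single observation.

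The first and main step is to establish that any non-negative harmonic function $u$ on $\C^+$ admits a representation
\[
u(x+\ii y)=by+\int_\R \frac{y}{(x-t)^2+y^2}\,\dd\mu(t)
\]
for some $b\ge 0$ and some positive Borel measure $\mu$ on $\R$ with $\int_\R \dd\mu(t)/(1+t^2)<\infty$. I would obtain this by transporting $u$ to the unit disk via the Cayley transform $\varphi:\mathbb{D}\to\C^+$, $\varphi(w)=\ii(1+w)/(1-w)$, so that $u\circ\varphi$ is non-negative and harmonic on $\mathbb{D}$. The classical Herglotz theorem on the disk then furnishes a positive Borel measure $\sigma$ on $\partial\mathbb{D}$ whose Poisson integral reproduces $u\circ\varphi$. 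Pushing $\sigma$ forward to $\R\cup\{\infty\}$ via the boundary values of $\varphi$ and changing variables in the Poisson kernel yields the half-plane formula above: a possible atom of $\sigma$ at the preimage of $\infty$ produces the linear coefficient $b$, while the remaining measure on $\R$ becomes $\mu$. The finiteness condition $\int \dd\mu(t)/(1+t^2)<\infty$ is exactly the integrability of $\sigma$ after absorbing the Jacobian of the change of variables.

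The second step is essentially algebraic. Define
\[
g(\omega)\deq b\omega+\int_\R\left(\frac{1}{x-\omega}-\frac{x}{1+x^2}\right)\dd\mu(x)\,,\qquad \omega\in\C^+\,.
\]
The real counterterm $-x/(1+x^2)$ is chosen so that the integrand decays like $1/x^2$ at infinity and the integral converges under the hypothesis on $\mu$; being real valued, it does not affect $\Im g$. A direct calculation shows that $\Im g(\omega)$ coincides with the Poisson representation of $u$ established in the previous step. Hence $\Im(f-g)\equiv 0$ on the connected open set $\C^+$, which forces the holomorphic function $f-g$ to be a real constant $a\in\R$, and the representation~\eqref{le pick} follows.

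The principal obstacle is the clean separation of the linear term $by$ from the integral part, equivalently the isolation of the atom of $\sigma$ at the Cayley preimage of $\infty$. The cleanest workaround is to define $b\deq\lim_{y\to\infty}u(\ii y)/y$ at the outset; this limit exists and is non-negative by a Harnack-type monotonicity argument applied to the non-negative harmonic $u$. One then applies the Herglotz representation on the disk to the residual $u(z)-b\,\Im z$, which no longer exhibits linear growth at infinity and whose associated boundary measure therefore has no atom at the preimage of $\infty$. After this separation is in place, the remaining steps reduce to routine changes of variable and the uniqueness of the harmonic conjugate up to a real additive constant.
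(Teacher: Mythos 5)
Your proposal is correct: passing to $u=\im f$, transporting to the disk by the Cayley transform, applying the Herglotz theorem, pushing the boundary measure forward to $\R$ (with the atom at the preimage of $\infty$ producing the term $b\,\omega$ and the Jacobian producing the weight $1+x^2$, hence the condition $\int\dd\mu(x)/(1+x^2)<\infty$), and then recovering $f$ from $\im f$ up to a real constant is exactly the classical proof of the Nevanlinna--Pick representation, which the paper does not prove itself but quotes with a reference to Akhiezer. One minor caveat: the ``Harnack-type monotonicity'' of $u(\ii y)/y$ invoked in your final paragraph is usually itself derived from the representation, so it is cleaner to keep your main route and simply split the disk measure $\sigma$ into its atom at $w=1$ plus the rest, which poses no genuine difficulty.
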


The negative reciprocal Stieltjes transforms of $\mu_\alpha$ and $\mu_\beta$ enjoy the following properties.	

\begin{lem}\label{lemm on hat measures}
Let $\mu_\alpha$ and $\mu_\beta$ satisfy Assumption~\ref{a.regularity of the measures}. Then there exist Borel measures $\widehat\mu_\alpha$ and $\widehat\mu_\beta$ such that
\begin{align}\label{it is Stieltjes transform}
 F_{\mu_\alpha}(\omega)-\omega&=\int_\R\frac{1}{x-\omega}\,\dd\widehat\mu_\alpha(x)\,, \qquad\qquad \omega\in \C^+\,,
\end{align}
with
\begin{align}\label{finit measure}
 0<\widehat\mu_\alpha(\R)=\int_\R x^2\dd\mu_\alpha(x)<\infty
\end{align}
and
\begin{align}\label{same support}
 \mathrm{supp}\,\widehat\mu_\alpha=\mathrm{supp}\,\mu_\alpha\,.
\end{align}
In particular, $\widehat\mu_\alpha$ is a finite compactly supported Borel measure. The same statements hold true when the index $\alpha$ is replaced by $\beta$.
\end{lem}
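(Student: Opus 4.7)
The plan is to apply the Nevanlinna representation of Lemma~\ref{lemma pick} to the function $g_\alpha(\omega)\deq F_{\mu_\alpha}(\omega) - \omega$ and read off the parameters from the high‑energy asymptotics of $F_{\mu_\alpha}$. First I need to verify that $g_\alpha$ is a Pick function, i.e.\ maps $\C^+$ into $\C^+\cup\R$; this follows from the standard bound $\im F_{\mu_\alpha}(\omega)\ge\im\omega$ for $\omega\in\C^+$ (see \eg Chapter III of~\cite{Ak}). Lemma~\ref{lemma pick} then provides a positive Borel measure $\widehat\mu_\alpha$ on $\R$ and constants $a\in\R$, $b\ge 0$ such that
\begin{align*}
 F_{\mu_\alpha}(\omega)-\omega = a + b\,\omega + \int_\R\Big(\frac{1}{x-\omega}-\frac{x}{1+x^2}\Big)\dd\widehat\mu_\alpha(x)\,,\qquad \omega\in\C^+\,.
\end{align*}

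To fix the three unknowns $a$, $b$, $\widehat\mu_\alpha$ I would use that $\mu_\alpha$ is compactly supported and centered, so inverting the moment expansion $m_{\mu_\alpha}(\ii\eta)=-\ii/\eta-m_2/(\ii\eta)^3+O(\eta^{-4})$ gives
\begin{align*}
 F_{\mu_\alpha}(\ii\eta) - \ii\eta = \frac{\ii\, m_2}{\eta} + O(\eta^{-2})\,,\qquad \eta\to\infty\,,\qquad m_2\deq\int_\R x^2\,\dd\mu_\alpha(x)\,.
\end{align*}
Since the left-hand side tends to $0$, the linear coefficient must satisfy $b=0$. Multiplying by $\eta$ and taking imaginary parts, monotone convergence in $\eta\,\im\int(x-\ii\eta)^{-1}\dd\widehat\mu_\alpha = \int \eta^2/(x^2+\eta^2)\,\dd\widehat\mu_\alpha$ identifies $\widehat\mu_\alpha(\R)=m_2$, which yields \eqref{finit measure} and, in particular, that $\widehat\mu_\alpha$ is finite. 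Finiteness permits me to absorb the $-x/(1+x^2)$ term into the constant; matching real parts (which also tend to $0$ by the expansion above) then forces $a=\int x/(1+x^2)\,\dd\widehat\mu_\alpha$, and the representation collapses to \eqref{it is Stieltjes transform}.

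For the support identity \eqref{same support} I would establish both inclusions separately. On $\R\setminus\mathrm{supp}\,\mu_\alpha$ the Stieltjes transform $m_{\mu_\alpha}$ extends to a real analytic function, and by the single-interval assumption combined with Remark~\ref{rem:parameters} it has no real zeros there; hence $F_{\mu_\alpha}(\omega)-\omega$ extends analytically across $\R\setminus\mathrm{supp}\,\mu_\alpha$, so Stieltjes inversion applied to \eqref{it is Stieltjes transform} yields $\widehat\mu_\alpha(\R\setminus\mathrm{supp}\,\mu_\alpha)=0$. Conversely, at Lebesgue-a.e.\ $E$ in the interior of $\mathrm{supp}\,\mu_\alpha$ one has $\im m_{\mu_\alpha}(E+\ii 0)=\pi\rho_\alpha(E)>0$ by Assumption~\ref{a.regularity of the measures}$(ii)$, so
\begin{align*}
 \im F_{\mu_\alpha}(E+\ii 0) = \frac{\pi\rho_\alpha(E)}{|m_{\mu_\alpha}(E+\ii 0)|^2}>0\,,
\end{align*}
and Stieltjes inversion for \eqref{it is Stieltjes transform} produces a strictly positive density for $\widehat\mu_\alpha$ throughout the interior of $\mathrm{supp}\,\mu_\alpha$. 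The main technical point I anticipate is the clean identification of the two scalar parameters $a$ and $b$ from the asymptotics; once this is settled, the nonvanishing of $m_{\mu_\alpha}$ outside the support (provided by Remark~\ref{rem:parameters}) makes the support argument a direct consequence of Pick–Nevanlinna theory and Stieltjes inversion. The same line of reasoning applies verbatim to $\mu_\beta$.
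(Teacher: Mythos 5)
Your proposal is correct. For the representation \eqref{it is Stieltjes transform} and the mass identity \eqref{finit measure} it follows the same route as the paper: verify that $F_{\mu_\alpha}(\omega)-\omega$ is a Pick function, apply Lemma~\ref{lemma pick}, and identify $b=0$, the constant $a$, and $\widehat\mu_\alpha(\R)=\int_\R x^2\,\dd\mu_\alpha$ from the large-$\eta$ expansion of $F_{\mu_\alpha}(\ii\eta)-\ii\eta$ (strict positivity of the mass being immediate since $\mu_\alpha$ is centered with non-degenerate interval support); likewise your proof of $\mathrm{supp}\,\widehat\mu_\alpha\subseteq\mathrm{supp}\,\mu_\alpha$ (no zeros of $m_{\mu_\alpha}$ off the single-interval support, then Stieltjes inversion) is the paper's argument. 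Where you genuinely diverge is the reverse inclusion: the paper argues by contradiction, taking an open interval $I\subset\mathrm{supp}\,\mu_\alpha\setminus\mathrm{supp}\,\widehat\mu_\alpha$ on which $\im f=0$, continuing $f=F_{\mu_\alpha}-\omega$ analytically across $I$ by Schwarz reflection, and deducing that $m_{\mu_\alpha}$ is meromorphic on $I$ with $\im m_{\mu_\alpha}=0$ there, contradicting $\rho_\alpha>0$ a.e.; you instead use Fatou-type non-tangential boundary values to show that the absolutely continuous part of $\widehat\mu_\alpha$ has density $\rho_\alpha(E)/|m_{\mu_\alpha}(E+\ii 0)|^2>0$ for a.e.\ $E$ in the interior of the support, which forces the support containment. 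Your route is more direct and even quantitative (it identifies the bulk density of $\widehat\mu_\alpha$), at the cost of invoking the a.e.\ existence and finiteness of the boundary values of $m_{\mu_\alpha}$ and the a.e.\ identification of $\frac{1}{\pi}\im f(E+\ii 0)$ with the a.c.\ density of $\widehat\mu_\alpha$, whereas the paper's reflection argument needs only qualitative analytic continuation; both use the same input from Assumption~\ref{a.regularity of the measures}, namely $\rho_\alpha>0$ a.e.\ on its support. The only cosmetic point is that your ``strictly positive density throughout the interior'' should read ``for a.e.\ point of the interior,'' which is all that is needed.
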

\begin{rem}
 Equations~\eqref{it is Stieltjes transform} and~\eqref{finit measure} are well-known result, see e.g.\ ~Proposition~2.2 in~\cite{Maa}.  For convenience we include their proofs  below.
\end{rem}

\begin{proof}[Proof of Lemma~\ref{lemm on hat measures}]
 We start by noticing that the Stieltjes transform of $\mu_\alpha$ admits the following asymptotic expansion,
 \begin{align*}
  m_{\mu_\alpha}(\ii\eta)&=\frac{1}{-\ii \eta}\int_\R\dd\mu_\alpha(x)-\frac{1}{(\ii\eta)^2}\int_\R x\,\dd\mu_\alpha(x)-\frac{1}{(\ii\eta)^3}\int_\R x^2\,\dd\mu_\alpha(x)+O(\eta^{-4})\,,\\
  &=\frac{1}{-\ii\eta}+\frac{1}{\ii \eta^3}\int_\R x^2\,\dd\mu_\alpha(x)+O(\eta^{-4})\,,
 \end{align*}
as $\eta\nearrow\infty$, where we used that $\mu_\alpha$ is a centered probability measure. Hence, taking the negative reciprocal, we find in the limit $\eta\nearrow\infty$ that
\begin{align}\label{used to compare}
 F_{\mu_\alpha}(\ii\eta)-\ii\eta=-\frac{1}{\ii\eta}\int_\R x^2\dd\mu_\alpha(x)+O(\eta^{-2})\,.
\end{align}

Next, note that $f(\omega)\deq F_{\mu_\alpha}(\omega) -\omega$ satisfies
\begin{align*}
 \im f(\omega)=\frac{\im m_{\mu_\alpha}(\omega)-\im \omega|m_{\mu_\alpha}(\omega)|^2}{|m_{\mu_\alpha}(\omega)|^2}=\im \omega\bigg(\int_\R\frac{\dd\mu_\alpha(x)}{|x-\omega|^2}\bigg/\left|\int_\R\frac{\dd\mu_\alpha(x)}{x-\omega}\right|^2-1\bigg)>0\,,
\end{align*}
for $\omega\in\C^+$, where we used that $\mu_\alpha$ is supported at more than one point by Assumption~\ref{a.regularity of the measures}. Thus Lemma~\ref{lemma pick}  applies to $f$ with some measure $\mu=:\widehat\mu_\alpha$. Comparing~\eqref{used to compare} with~\eqref{le pick}, we conclude that $b=0$ and thus
 \begin{align}\label{FF}
F_{\mu_\alpha}(\omega) -\omega=f(\omega) = a+\int_\R\left(\frac{1}{x-\omega}-\frac{x}{1+x^2}\right)\dd\widehat\mu_\alpha(x)\,.
 \end{align}
Choosing $\omega=\ii\eta$ in~\eqref{FF} and taking $\eta\nearrow\infty$, comparison with~\eqref{used to compare} immediately yields
that $a=\int_\R\frac{x}{1+x^2}\dd\widehat\mu_\alpha$  and thus~\eqref{it is Stieltjes transform} holds. Furthermore,
\eqref{finit measure} also holds by comparing the coefficient of $\eta^{-1}$
on the right sides of~\eqref{it is Stieltjes transform} and~\eqref{used to compare}  in the large $\eta$ limit.

 Taking the imaginary parts in~\eqref{FF} we further obtain
\begin{align}\label{le smanl1}
\im f(\omega)=\frac{\im m_{\mu_\alpha}(\omega)}{|m_{\mu_\alpha}(\omega)|^2}-\im \omega= \int_\R\frac{\im\omega}{|x-\omega|^2} \;\dd\widehat\mu_\alpha(x)
 \,,\qquad\qquad \omega\in\C^+\,.
\end{align}
From Assumption~\ref{a.regularity of the measures}, we know that the extension of $m_{\mu_\alpha}$ to $\R$ is continuous and real valued outside the support of $\mu_\alpha$. Since in addition~$\mu_\alpha$ is supported on a single interval, we conclude that $m_{\mu_\alpha}$ does not have any zeros on $\R\backslash\mathrm{supp}\,\mu_\alpha$. Hence taking the limit $\im\omega\searrow 0$ in~\eqref{le smanl1},
we conclude by the Stieltjes inversion formula that $\widehat\mu_\alpha$ is absolutely continuous with respect to Lebesgue measure on the complement of $\mathrm{supp}\,\mu_\alpha$ with vanishing density function. Hence we must have $\mathrm{supp}\,\widehat\mu_\alpha \subseteq \mathrm{supp}\,\mu_\alpha$. In particular $\widehat\mu_\alpha$ is a finite compactly supported measure. 

Finally, to conclude~\eqref{same support}, 
 we need to prove the opposite containment, \ie that $\mathrm{supp}\,\widehat\mu_\alpha  \supseteq  \mathrm{supp}\,\mu_\alpha$. Suppose, on the contrary, that $\mathrm{supp}\,\widehat\mu_\alpha$
is a proper subset of  $\mathrm{supp}\,\mu_\alpha=[E_-^\alpha, E_+^\alpha]$. Then we can find a non-empty open interval $I\subset\mathrm{supp}\,\mu_\alpha\backslash\mathrm{supp}\,\widehat{\mu}_\alpha$ such that $f(\omega)\,:\,\C^+\rightarrow\C^+$ extends continuously to $I$ with $\im f(\omega)=0$, for all $\omega\in I$. Then by the Schwarz reflection principle, $f$ extends analytically through $I$ and, hence, $m$ is meromorphic on $I$. However, since $I\subset\mathrm{supp}\,\mu_\alpha$, we have $\lim_{\eta\searrow 0}\im m_{\mu_\alpha}(\omega+\ii\eta)>0$ by Assumption~\ref{a.regularity of the measures}, for almost all $\omega\in I$. Since $m(\omega)$ is meromorphic on $I$ and $\im f(\omega)=\im m(\omega)/|m(\omega)|^2$, $\omega\in I$, we hence also have $\lim_{\eta\searrow 0}\im f(\omega+\ii\eta)>0$ for almost all $\omega\in I$, a contradiction to $\im f(\omega)=0$, for all $\omega\in  I$. Thus $I$ must be empty and we have $\mathrm{supp}\,\widehat\mu_\alpha = \mathrm{supp}\,\mu_\alpha$. This proves~\eqref{same support} and concludes the proof of Lemma~\ref{lemm on hat measures}. 
\end{proof}

\begin{rem}
 As the measures $\widehat\mu_\alpha$ and $\widehat\mu_\beta$ are finite and compactly supported, we have by dominated convergence that
 \begin{align}\label{we can do what we want}
  F'_{\mu_\alpha}(\omega)-1=\int_\R\frac{1}{(x-\omega)^2}\,\dd\widehat\mu_\alpha(x)\,,\qquad F''_{\mu_\alpha}(\omega)=\int_\R\frac{1}{(x-\omega)^3}\,\dd\widehat\mu_\alpha(x)\,,
 \end{align}
for all $\omega\in\C^+\cup\R\backslash\,\mathrm{supp}\,\widehat\mu_\alpha$; the same relations hold with the $\alpha$ changed to $\beta$.
\end{rem}

\subsection{Bounds on the subordination functions} The goal of this subsection is to control the imaginary parts of the subordination functions within a sufficiently large neighborhood of the support of the free convolution measure.

We first introduce the domain of the spectral parameter $z$ we will be working on. Let $\mathcal{J}\subset\R$ be the interval
\begin{align}\label{le J}
 \mathcal{J}\deq\{E\in\R\,:\, E_-^\alpha+E_-^\beta-1\le E\le E_+^\alpha+E_+^\beta+1\,\}\,.
\end{align}
Then we introduce the domain
\begin{align}\label{le domain}
 \mathcal{E}\deq\{z=E+\ii\eta\in\C^+\cup\R\,: E\in\mathcal{J}\,, 0\le\eta\le 1\}\,.
\end{align}
 Lemma~3.1 of~\cite{Voi86} shows that $\mathrm{supp}\,\mu_\alpha\boxplus\mu_\beta\subset \mathcal{J}$ and we can therefore restrict the discussion to that interval, respectively to $\mathcal{E}$. Finally, we use the shorthand
$$m(z)= m_{\mu_\alpha\boxplus\mu_\beta}(z)\,,\qquad\qquad z\in\C^+\,,$$ to denote the Stieltjes transform of $\mu_\alpha\boxplus\mu_\beta$, and also its continuous extension to $\mathcal{E}$.

The following is the main result of this subsection.
\begin{pro}\label{le pro links}
Assume that $\mu_\alpha$ and $\mu_\beta$ satisfy Assumption~\ref{a.regularity of the measures}. Then there is a constant $C\ge 1$, depending on $\mu_\alpha$ and $\mu_\beta$ via their control parameters such that
\begin{align}\label{le links}
 C^{-1}\im m(z)\le \im \omega_\alpha(z)\le C\,\im m(z)\,,\nonumber\\
 \qquad  C^{-1}\im m(z)\le \im \omega_\beta(z)\le C\,\im m(z)\,,
\end{align}
for all $z\in\mathcal{E}$. 
\end{pro}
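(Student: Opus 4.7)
The plan is to recast the inequalities \eqref{le links} as uniform two-sided bounds on two positive integrals, reduce those bounds via algebraic manipulation of the subordination system to a localization property of the subordination functions, and then establish the localization by a compactness argument.

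Combining the subordination equations \eqref{le definiting equations} with Lemma \ref{lemm on hat measures} gives the master identities
\begin{align*}
\omega_\alpha(z) - z = \int_\R \frac{\dd\widehat\mu_\alpha(x)}{x-\omega_\beta(z)}, \qquad \omega_\beta(z) - z = \int_\R \frac{\dd\widehat\mu_\beta(x)}{x-\omega_\alpha(z)},
\end{align*}
and \eqref{le kkv} gives $m(z) = m_{\mu_\alpha}(\omega_\beta(z)) = m_{\mu_\beta}(\omega_\alpha(z))$. Taking imaginary parts yields $\im m(z) = I_\alpha(z)\,\im\omega_\beta(z) = I_\beta(z)\,\im\omega_\alpha(z)$ with $I_\alpha(z) := \int \dd\mu_\alpha/|x-\omega_\beta(z)|^2$ and $I_\beta(z) := \int \dd\mu_\beta/|x-\omega_\alpha(z)|^2$, and the analogues $\im\omega_\alpha - \im z = \widehat I_\alpha \im\omega_\beta$ and $\im\omega_\beta - \im z = \widehat I_\beta \im\omega_\alpha$ with hatted integrals against $\widehat\mu_\alpha,\widehat\mu_\beta$. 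Comparing with the imaginary part of $-1/m = \omega_\alpha+\omega_\beta-z$ then produces the key algebraic relations
\begin{align*}
I_\alpha(z) = |m(z)|^2\bigl(1+\widehat I_\alpha(z)\bigr), \qquad I_\beta(z) = |m(z)|^2\bigl(1+\widehat I_\beta(z)\bigr),
\end{align*}
so that \eqref{le links} reduces to uniform bounds $c \le I_\alpha(z), I_\beta(z) \le C$ on $\mathcal{E}$.

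These bounds follow from a localization statement: there exist constants $c_0, M > 0$, depending only on the control parameters, such that $|\omega_\alpha(z)|, |\omega_\beta(z)| \le M$, $\mathrm{dist}(\omega_\beta(z), \mathrm{supp}\,\mu_\alpha) \ge c_0$, and $\mathrm{dist}(\omega_\alpha(z), \mathrm{supp}\,\mu_\beta) \ge c_0$ for all $z \in \mathcal{E}$. Granted this, $|x-\omega_\beta|$ is uniformly bounded above and below on $\mathrm{supp}\,\widehat\mu_\alpha = \mathrm{supp}\,\mu_\alpha$ by Lemma \ref{lemm on hat measures}, so $I_\alpha$ and $\widehat I_\alpha$ are uniformly controlled from both sides (and symmetrically for $\beta$); the identity $|m(z)|^{-1} = |\omega_\alpha+\omega_\beta-z|$ is bounded above, yielding a lower bound on $|m|$, while the Cauchy--Schwarz inequality $|m(z)|^2 \le I_\alpha(z)$ combined with the upper bound on $I_\alpha$ gives an upper bound on $|m|$. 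The algebraic relations above then close the loop.

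The principal obstacle is the localization itself. I plan to prove it by compactness, exploiting the continuous extensions of $\omega_\alpha, \omega_\beta$ to $\overline{\mathcal{E}}$ afforded by Proposition \ref{prop extension}: any violating sequence would, after extracting a subsequence, have $\omega_\beta(z_n) \to y \in [E_-^\alpha, E_+^\alpha]$ with $\im\omega_\beta(z_n) \to 0$, the case $\omega_\beta(z_n) \to \infty$ being ruled out separately using the normalization \eqref{le limit of omega} and the master identities. Plugging this limit into the first master identity and exploiting the Jacobi structure of $\widehat\mu_\alpha$ inherited from \eqref{le constant C in measures} (via Stieltjes inversion applied to Lemma \ref{lemm on hat measures}) yields controlled limits for $\omega_\alpha(z_n) - z_n$; inserting these into the second master identity and using the lower bound \eqref{the minis} on $|m_{\mu_\beta}|$ just outside $\mathrm{supp}\,\mu_\beta$ produces a quantitative contradiction. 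The delicate point is tracking the two master identities simultaneously and making full use of the exponents $-1 < t_\pm^\alpha, t_\pm^\beta < 1$ to guarantee that the limiting integrals exist and can be controlled in terms of the control parameters of Remark \ref{rem:parameters}.
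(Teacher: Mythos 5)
Your reduction is sound and is in fact the same as the paper's: from $\im m(z)=I_\alpha(\omega_\beta(z))\,\im\omega_\beta(z)=I_\beta(\omega_\alpha(z))\,\im\omega_\alpha(z)$ the proposition is equivalent to uniform two-sided bounds on $I_\alpha(\omega_\beta(z))$ and $I_\beta(\omega_\alpha(z))$ on $\mathcal{E}$, and your identity $I_\alpha=|m|^2(1+\widehat I_\alpha)$ is correct (it is \eqref{le IKC} at $\omega=\omega_\beta(z)$). The genuine gap is in the localization step, i.e.\ the uniform lower bound on $\mathrm{dist}(\omega_\beta(z),\mathrm{supp}\,\mu_\alpha)$ and $\mathrm{dist}(\omega_\alpha(z),\mathrm{supp}\,\mu_\beta)$, which is the actual content of the proposition, and which you only sketch. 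Your compactness scheme never exhibits the contradiction: if $\omega_\beta(z_n)\to y$ with $y$ an endpoint of $\mathrm{supp}\,\mu_\alpha$ carrying a negative exponent, then $m_{\mu_\alpha}$ blows up and $F_{\mu_\alpha}(\omega_\beta(z_n))\to 0$, so $\omega_\alpha(z_n)-z_n\to -y$ and $\im\omega_\alpha(z_0)=\im z_0$, which may vanish; if simultaneously $\omega_\alpha(z_n)$ approaches an endpoint of $\mathrm{supp}\,\mu_\beta$ with negative exponent (so $F_{\mu_\beta}\to 0$ as well), neither the imaginary part of your second master identity nor \eqref{the minis} (which only controls $|m_{\mu_\beta}|$ outside the support) gives a contradiction without a genuine comparison of divergence rates. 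Moreover, pointwise boundary values of $m_{\mu_\alpha}$ at the specific limit point $y$ are not guaranteed by the a.e.\ power-law assumption, so "plugging the limit into the master identity" needs more care than you indicate.

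The paper's mechanism, which your proposal misses, is quantitative and sidesteps all these cases: since $\im\omega_\alpha,\im\omega_\beta\ge\im z$, one has $\widehat I_\alpha(\omega_\beta(z))\,\widehat I_\beta(\omega_\alpha(z))\le 1$ (see \eqref{smaller than one}), and $\widehat I_\beta(\omega_\alpha(z))\ge c'$ follows from the boundedness of $\omega_\alpha$ alone, with no separation needed; hence $\widehat I_\alpha(\omega_\beta(z))$ is uniformly bounded, while the Jacobi exponents in $(-1,1)$ force $\widehat I_\alpha(\omega)=T(\omega)-1\nearrow\infty$ as $\omega$ approaches $\mathrm{supp}\,\mu_\alpha$ (estimates \eqref{from the computational lemma}--\eqref{le diverging3}), which yields the gap with constants depending only on the control parameters. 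Two further points: a compactness argument for the fixed pair $(\mu_\alpha,\mu_\beta)$, even if completed, would not give the control-parameter dependence asserted in the statement (cf.\ Remark~\ref{rem:parameters}); and the boundedness $|\omega_\alpha(z)|,|\omega_\beta(z)|\le M$ on $\mathcal{E}$ does not follow from the normalization \eqref{le limit of omega}, which concerns $z=\ii\eta$, $\eta\to\infty$ only --- it is a separate input (Lemma~\ref{lemma bound on the subi}, quoted from \cite{BES17}) that you should invoke rather than try to recover inside the same compactness sketch.
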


We split the proof of Proposition~\ref{le pro links} in several steps.
We start with two definitions.
\begin{defn} ($I_\alpha$, $I_\beta$) Define the functions $I_\alpha$ and $I_\beta$ by setting
\begin{align}\label{le I ohne hat}
I_\alpha(\omega)\deq\int_\R\frac{\dd\mu_\alpha(x)}{|x-\omega|^2}\,,\qquad I_\beta(\omega)\deq\int_\R\frac{\dd\mu_\beta(x)}{|x-\omega|^2}\,,\qquad\qquad \omega\in\C^+\,.
\end{align}
\end{defn}
\begin{rem}
Note that $I_\alpha$ and $I_\beta$ extend continuously to the real line outside the respective supports of $\mu_\alpha$ or $\mu_\beta$. Moreover, from~\eqref{le kkv} we note that
\begin{align}\label{le to be noted}
 \im m(z)=\im \omega_\beta(z)\cdot I_\alpha(\omega_\beta(z))=\im \omega_\alpha(z)\cdot I_\beta(\omega_\alpha(z))\,,\qquad\quad z\in\C^+\,.
\end{align}
\end{rem}

\begin{defn} ($\widehat{I}_\alpha$, $\widehat{I}_\beta$)
Let $\widehat\mu_\alpha$ and $\widehat\mu_\beta$ denote the measures from Lemma~\ref{lemm on hat measures}. Set then
\begin{align}\label{le I hat}
 \widehat I_\alpha(\omega)\deq\int_\R\frac{\dd\widehat\mu_\alpha(x)}{|x-\omega|^2}\,,\qquad\qquad \widehat I_\beta(\omega)\deq\int_\R\frac{\dd\widehat\mu_\beta(x)}{|x-\omega|^2}\,,\qquad\qquad  \omega\in\C^+\,.
\end{align}
\end{defn}
\begin{rem}
 Taking the imaginary part in~\eqref{it is Stieltjes transform} and using~\eqref{le definiting equations}, we find from~\eqref{le I hat} that
 \begin{align}\label{le super trup}
  \widehat I_\alpha(\omega_\beta(z))=\frac{\im \omega_\alpha(z)-\im z}{\im \omega_\beta(z)}\,,\quad\widehat I_\beta(\omega_\alpha(z))=\frac{\im \omega_\beta(z)-\im z}{\im \omega_\alpha(z)}\,,\qquad z\in\C^+\,.
 \end{align}
Hence, since $\im\omega_\alpha(z)\ge \im z$, $\im\omega_\beta(z)\ge \im z$ by Proposition~\ref{le prop 1}, we further find that
 \begin{align}\label{smaller than one}
  \widehat I_\alpha(\omega_\beta(z))\cdot\widehat I_{\beta}(\omega_\alpha(z))\le 1\,,\qquad\qquad z\in\C^+\,.
 \end{align}
\end{rem}

 The following result shows that the subordination functions are uniformly bounded, under Assumption~\ref{a.regularity of the measures}, on $\mathcal{E}$.
\begin{lem}[Lemma 3.2.~of~\cite{BES17}]\label{lemma bound on the subi}
Assume that $\mu_\alpha$ and $\mu_\beta$ satisfy Assumption~\ref{a.regularity of the measures}. Then
\begin{align}\label{upper bound on omegas}
 |\omega_\alpha(z)|\le C\,,\qquad |\omega_\beta(z)|\le C\,,
\end{align}
uniformly in $z\in\mathcal{E}$, with constants depending on $\mu_\alpha$ and $\mu_\beta$ via their control parameters.
\end{lem}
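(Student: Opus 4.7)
The plan is to combine the subordination equations~\eqref{le definiting equations} with the integral representation of $F_{\mu_\alpha}$ and $F_{\mu_\beta}$ supplied by Lemma~\ref{lemm on hat measures}. Substituting $F_{\mu_\alpha}(\omega)=\omega+\int_\R(x-\omega)^{-1}\dd\widehat\mu_\alpha(x)$ into the relation $\omega_\alpha(z)+\omega_\beta(z)-z=F_{\mu_\alpha}(\omega_\beta(z))$ yields the pair
\begin{align*}
\omega_\alpha(z)=z+\int_\R\frac{\dd\widehat\mu_\alpha(x)}{x-\omega_\beta(z)}\,,\qquad \omega_\beta(z)=z+\int_\R\frac{\dd\widehat\mu_\beta(x)}{x-\omega_\alpha(z)}\,.
\end{align*}
Since $\widehat\mu_\alpha$ and $\widehat\mu_\beta$ are finite measures with supports contained in $[E_-^\alpha,E_+^\alpha]$ and $[E_-^\beta,E_+^\beta]$ respectively, these identities already prevent $|\omega_\alpha|$ and $|\omega_\beta|$ from simultaneously blowing up: once $|\omega_\beta(z)|$ exceeds, say, twice $\max\{|E_-^\alpha|,|E_+^\alpha|\}$, the first integral is $O(1/|\omega_\beta(z)|)$ and $\omega_\alpha(z)$ is pinned within a fixed distance of the bounded point $z\in\mathcal{E}$.

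I would then proceed by contradiction. Suppose that along some sequence $(z_n)\subset\mathcal{E}$ one has $|\omega_\beta(z_n)|\to\infty$. By compactness of $\mathcal{E}$, extract a subsequence with $z_n\to z_0\in\mathcal{E}$. The first identity then gives $\omega_\alpha(z_n)\to z_0$, so $\omega_\alpha(z_n)$ stays in a compact subset of $\C^+\cup\R$. Rewriting the second subordination equation as
\begin{align*}
\omega_\beta(z)-z=F_{\mu_\beta}(\omega_\alpha(z))-\omega_\alpha(z)=-\frac{1}{m_{\mu_\beta}(\omega_\alpha(z))}-\omega_\alpha(z)\,,
\end{align*}
the hypothesis $|\omega_\beta(z_n)-z_n|\to\infty$ combined with the boundedness of $\omega_\alpha(z_n)$ forces $m_{\mu_\beta}(\omega_\alpha(z_n))\to 0$, so that, passing to the limit along the continuous extension of $m_{\mu_\beta}$ from $\C^+$, we obtain $m_{\mu_\beta}(z_0)=0$.

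The remaining task, where the main difficulty lies, is to exclude the existence of any zero of $m_{\mu_\beta}$ on $\mathcal{E}$; a compactness argument on $\mathcal{E}$ will then upgrade this pointwise nonvanishing to a uniform lower bound $|m_{\mu_\beta}|\ge c>0$ and produce the desired contradiction. For $\omega\in\C^+$ the inequality $\im m_{\mu_\beta}(\omega)>0$ is immediate; for $\omega\in(E_-^\beta,E_+^\beta)$ Assumption~\ref{a.regularity of the measures} gives $\im m_{\mu_\beta}(\omega+\ii 0)=\pi\rho_\beta(\omega)>0$; and for $\omega$ real outside $[E_-^\beta,E_+^\beta]$ in the bounded range covered by $\mathcal{E}$, the strict monotonicity of the real-valued function $m_{\mu_\beta}$ on each connected component of the complement of $[E_-^\beta,E_+^\beta]$, together with the control lower bound~\eqref{the minis} in the neighbourhoods $[E_-^\beta-1,E_-^\beta]$ and $[E_+^\beta,E_+^\beta+1]$, excludes zeros. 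The most delicate points are the endpoints $E_\pm^\beta$ themselves, where $\rho_\beta$ vanishes or blows up according to the sign of $t_\pm^\beta$; here the assumption $-1<t_\pm^\beta<1$ is used to show that the non-tangential limit of $m_{\mu_\beta}$ at $E_\pm^\beta$ is either finite and of definite sign or infinite, but in no case zero. The analogous bound on $|\omega_\alpha(z)|$ follows by swapping the roles of $\alpha$ and $\beta$.
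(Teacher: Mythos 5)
First, a remark on the comparison: the paper does not prove this lemma at all — it quotes it as Lemma~3.2 of~\cite{BES17} — so your attempt is a self-contained substitute rather than a variant of an in-paper argument. Your first two paragraphs are sound: the identities $\omega_\alpha(z)=z+\int_\R(x-\omega_\beta(z))^{-1}\dd\widehat\mu_\alpha(x)$ and $\omega_\beta(z)=z+\int_\R(x-\omega_\alpha(z))^{-1}\dd\widehat\mu_\beta(x)$ do follow from~\eqref{le definiting equations} and~\eqref{it is Stieltjes transform}, they correctly show that $\omega_\alpha$ and $\omega_\beta$ cannot both be large, and the contradiction setup forcing $m_{\mu_\beta}(\omega_\alpha(z_n))\to0$ with $\omega_\alpha(z_n)\to z_0$ is correct.

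The gap is exactly the step you defer to the end. What your contradiction needs is a \emph{uniform} lower bound $|m_{\mu_\beta}(\omega)|\ge c>0$ for $\omega$ in a bounded subset of $\C^+$ (the points $\omega_\alpha(z_n)$ approach $z_0$ from the open half-plane, not along the real line), and you propose to obtain it from pointwise nonvanishing plus compactness. Under Assumption~\ref{a.regularity of the measures} this does not work: the density is only measurable with a.e.\ two-sided power-law bounds, so $m_{\mu_\beta}$ need not extend continuously to $\R$ (its boundary values, in particular the principal-value real part, exist only almost everywhere, and the claimed non-tangential limits at $E_\pm^\beta$ need not exist), so compactness cannot upgrade pointwise statements at boundary points to a uniform bound. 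Worse, the uniform bound itself can fail when an edge exponent lies in $(0,1)$: take $\rho_\beta(x)\asymp(x-E_-^\beta)^{t}$ with the comparability constants attained on alternating blocks, and let $x$ be a block boundary at distance $a$ from $E_-^\beta$ where the density drops from $\approx C a^{t}$ to $\approx C^{-1}a^{t}$. Then the near-field part of $\re m_{\mu_\beta}(x+\ii\eta)$ behaves like $-(C-C^{-1})\,a^{t}\log(a/\eta)\to-\infty$ as $\eta\searrow0$, while for $\eta\asymp a$ one has $\re m_{\mu_\beta}(x+\ii\eta)\approx m_{\mu_\beta}(E_-^\beta)>0$; by continuity in $\eta$ there is an intermediate height at which $\re m_{\mu_\beta}=0$, and there $|m_{\mu_\beta}(x+\ii\eta)|=\im m_{\mu_\beta}(x+\ii\eta)\asymp a^{t}\to0$ as $a\to0$. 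So "$m_{\mu_\beta}$ has no zeros" is not the statement that closes your argument; you would additionally have to exclude that $\omega_\alpha(z_n)\approx z_n$ approaches an edge of $\mu_\beta$ in precisely this regime, and nothing in the proposal does that. Note also that the control parameter~\eqref{the minis} provides a lower bound on $|m_{\mu_\beta}|$ only \emph{outside} the support; near and inside the support no such bound is assumed, which is why the quoted proof in~\cite{BES17} has to argue differently at this point.
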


From~\eqref{upper bound on omegas} we directly get the following estimates.
\begin{lem}\label{lemma x}
 Assume that $\mu_\alpha$ and $\mu_\beta$ satisfy Assumption~\ref{a.regularity of the measures}. Let $\mathcal{E}$ be the domain defined in~\eqref{le domain}. Then we have
 \begin{align}\label{det forsta}
  \inf_{z\in\mathcal{E}} I_\alpha(\omega_\beta(z))\ge c\,,\qquad\qquad \inf_{z\in\mathcal{E}}I_\beta(\omega_\alpha(z))\ge c\,,
 \end{align}
for some constant $c>0$. Similarly, we have
\begin{align}\label{det andra}
  \inf_{z\in\mathcal{E}} \widehat I_\alpha(\omega_\beta(z))\ge c'\,,\qquad\qquad \inf_{z\in\mathcal{E}}\widehat I_\beta(\omega_\alpha(z))\ge c'\,,
\end{align}
for some constant $c'>0$. In particular, we have
\begin{align}\label{det tredje}
 \sup_{z\in\mathcal{E}}\widehat I_\alpha(\omega_\beta(z))\le C'\,, \qquad\qquad\sup_{z\in\mathcal{E}}\widehat I_\beta(\omega_\alpha(z))\le C'\,,
\end{align}
for some constant $C'$.
\end{lem}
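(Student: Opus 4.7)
The plan is straightforward: all four bounds will follow from the boundedness of the subordination functions (Lemma~\ref{lemma bound on the subi}) combined with the fact that $\mu_\alpha$, $\mu_\beta$, $\widehat\mu_\alpha$, $\widehat\mu_\beta$ are all compactly supported finite positive measures of strictly positive total mass. The upper bounds on $\widehat I_\alpha$ and $\widehat I_\beta$ will then fall out of the lower bounds via the identity $\widehat I_\alpha(\omega_\beta(z))\cdot \widehat I_\beta(\omega_\alpha(z))\le 1$ established in~\eqref{smaller than one}.

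First I would fix $z\in\mathcal{E}$ and apply Lemma~\ref{lemma bound on the subi} to obtain $|\omega_\alpha(z)|,|\omega_\beta(z)|\le C$, with $C$ depending only on the control parameters of $\mu_\alpha$ and $\mu_\beta$. Since $\mathrm{supp}\,\mu_\alpha = [E_-^\alpha,E_+^\alpha]$ is contained in a fixed compact interval, there exists $M>0$ (depending only on the control parameters) such that $|x-\omega_\beta(z)|\le M$ for every $x\in\mathrm{supp}\,\mu_\alpha$ and every $z\in\mathcal{E}$. Then from the definition~\eqref{le I ohne hat},
\begin{align*}
 I_\alpha(\omega_\beta(z))=\int_\R \frac{\dd\mu_\alpha(x)}{|x-\omega_\beta(z)|^2}\ge \frac{1}{M^2}\int_\R \dd\mu_\alpha(x)=\frac{1}{M^2}\,,
\end{align*}
which gives the first half of~\eqref{det forsta}. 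An identical argument (with $\mu_\alpha$ replaced by $\mu_\beta$ and $\omega_\beta$ by $\omega_\alpha$) establishes the second half.

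For~\eqref{det andra}, I would repeat the same estimate using $\widehat\mu_\alpha$ in place of $\mu_\alpha$. By Lemma~\ref{lemm on hat measures}, $\widehat\mu_\alpha$ is compactly supported with $\mathrm{supp}\,\widehat\mu_\alpha=\mathrm{supp}\,\mu_\alpha$, and its total mass is $\widehat\mu_\alpha(\R)=\int x^2\,\dd\mu_\alpha(x)$. Under Assumption~\ref{a.regularity of the measures} the measure $\mu_\alpha$ has support equal to the non-degenerate interval $[E_-^\alpha,E_+^\alpha]$ with strictly positive density bounds from~\eqref{le constant C in measures}, so $\int x^2\,\dd\mu_\alpha(x)$ is bounded below by a strictly positive constant depending only on the control parameters (recall that $\mu_\alpha$ is centered, so the second moment coincides with the variance, which is strictly positive since the support is not a single point, and is in fact one of the quantities listed in Remark~\ref{rem:parameters}). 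Combining these facts with the uniform bound $|x-\omega_\beta(z)|\le M$ for $x\in\mathrm{supp}\,\widehat\mu_\alpha$ yields $\widehat I_\alpha(\omega_\beta(z))\ge \widehat\mu_\alpha(\R)/M^2\ge c'$, and symmetrically for $\widehat I_\beta(\omega_\alpha(z))$.

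Finally, to obtain~\eqref{det tredje}, I would invoke the key bound~\eqref{smaller than one}, namely $\widehat I_\alpha(\omega_\beta(z))\cdot \widehat I_\beta(\omega_\alpha(z))\le 1$, which was derived above from the inequalities $\im\omega_\alpha(z)\ge \im z$ and $\im\omega_\beta(z)\ge \im z$ of Proposition~\ref{le prop 1} together with the Stieltjes-type representation~\eqref{le super trup}. Rearranging gives
\begin{align*}
 \widehat I_\alpha(\omega_\beta(z))\le \frac{1}{\widehat I_\beta(\omega_\alpha(z))}\le \frac{1}{c'}\,,
\end{align*}
and similarly for $\widehat I_\beta(\omega_\alpha(z))$, which provides the desired constant $C'=1/c'$. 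There is no genuine obstacle here; the only delicate point is observing that the positivity of $\widehat\mu_\alpha(\R)$ and $\widehat\mu_\beta(\R)$ is quantitative and controlled by the parameters listed in Remark~\ref{rem:parameters}, so that all constants depend only on $\mu_\alpha$ and $\mu_\beta$ through their control parameters, as required.
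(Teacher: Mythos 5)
Your proof is correct and follows essentially the same route as the paper: the lower bounds~\eqref{det forsta} and~\eqref{det andra} come from the uniform bound on the subordination functions in Lemma~\ref{lemma bound on the subi} together with the compact supports and positive total masses of $\mu_\alpha,\mu_\beta,\widehat\mu_\alpha,\widehat\mu_\beta$ (using~\eqref{finit measure}), and the upper bounds~\eqref{det tredje} follow from~\eqref{smaller than one}. You have merely made explicit the details the paper leaves implicit, including the quantitative positivity of the second moments as control parameters.
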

\begin{proof}
 The estimates in~\eqref{det forsta} follow directly from the definitions of $I_\alpha$, $I_\beta$ in~\eqref{le I ohne hat} and the upper bounds in~\eqref{upper bound on omegas}. The estimates in~\eqref{det andra} follow from the definitions of $\widehat I_\alpha$, $\widehat I_\beta$ in~\eqref{le I hat},~\eqref{finit measure} and the upper bounds in~\eqref{upper bound on omegas}. Finally,~\eqref{det tredje} follows by combining~\eqref{det andra} and~\eqref{smaller than one}.
\end{proof}
The estimates in Lemma~\ref{lemma x} are complemented by the following result. Define
\begin{align}\label{le distance functions}
 d_\alpha(\omega)\deq\mathrm{dist}\,(\omega,\mathrm{supp}\,\mu_\alpha)\,,\qquad d_\beta(\omega)\deq\mathrm{dist}\,(\omega,\mathrm{supp}\,\mu_\beta)\,.
\end{align}

\begin{lem}\label{gap corollary}
 Under the assumptions of Lemma~\ref{lemma x}, we have the bounds
 \begin{align}\label{le gap}
  \inf_{z\in\mathcal{E}}d_\alpha(\omega_\beta(z))\ge g\,,\qquad\qquad \inf_{z\in\mathcal{E}}d_\beta(\omega_\alpha(z))\ge g\,,
 \end{align}
for a strictly positive constant $g$ depending only on $\mu_\alpha$ and $\mu_\beta$. Moreover, we have that
\begin{align}\label{gap bounds everything}
 \sup_{z\in\mathcal{E}} I_\alpha(\omega_\beta(z))\le \frac{1}{g^2}\,,\qquad\qquad \sup_{z\in\mathcal{E}} I_\beta(\omega_\alpha(z))\le \frac{1}{g^2}\,.
\end{align}

\end{lem}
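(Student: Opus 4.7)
First observe that~\eqref{gap bounds everything} is a direct consequence of~\eqref{le gap}: once the gap bounds are known,
\[
 I_\alpha(\omega_\beta(z))=\int_\R\frac{\dd\mu_\alpha(x)}{|x-\omega_\beta(z)|^2}\le \frac{\mu_\alpha(\R)}{d_\alpha(\omega_\beta(z))^2}\le \frac{1}{g^2},
\]
and similarly for $I_\beta$. So the lemma reduces to proving~\eqref{le gap}, and by the symmetric role of $\alpha$ and $\beta$ I focus on the bound for $\omega_\beta$ relative to $\mathrm{supp}\,\mu_\alpha$.

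The plan is to argue by contradiction, playing compactness, continuity of the subordination function, and Fatou's lemma off against the uniform upper bound $\widehat I_\alpha(\omega_\beta(z))\le C'$ from~\eqref{det tredje}. Suppose, on the contrary, there exists a sequence $\{z_n\}\subset \mathcal{E}$ with $d_\alpha(\omega_\beta(z_n))\to 0$. Lemma~\ref{lemma bound on the subi} keeps $\omega_\beta(z_n)$ uniformly bounded, and Proposition~\ref{prop extension} extends $\omega_\beta$ continuously to the compact set $\overline{\mathcal{E}}$ with values in $\C^+\cup\R\cup\{\infty\}$, where the value $\infty$ is ruled out by Lemma~\ref{lemma bound on the subi}. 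Passing to a subsequence we may assume $z_n\to z_*\in\overline{\mathcal{E}}$ and $\omega_\beta(z_n)\to \omega_*=\omega_\beta(z_*)\in\C^+\cup\R$. The trivial inequality $d_\alpha(\omega)\ge \im \omega$ together with continuity of $d_\alpha$ force $\omega_*\in[E_-^\alpha,E_+^\alpha]=\mathrm{supp}\,\widehat\mu_\alpha$, and Fatou's lemma then yields
\[
 \liminf_{n\to\infty}\widehat I_\alpha(\omega_\beta(z_n))\ge \int_\R \frac{\dd\widehat\mu_\alpha(x)}{|x-\omega_*|^2}.
\]
If the right-hand side is $+\infty$, this contradicts~\eqref{det tredje}.

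The remaining technical step is therefore to verify $\int_\R|x-\omega_*|^{-2}\,\dd\widehat\mu_\alpha=+\infty$ for every $\omega_*\in[E_-^\alpha,E_+^\alpha]$. Applying Stieltjes inversion to the representation~\eqref{it is Stieltjes transform} gives the explicit density
\[
 \rho_{\widehat\mu_\alpha}(u)=\frac{\rho_\alpha(u)}{|m_{\mu_\alpha}(u+\ii 0)|^2}\qquad \text{for a.e. }u\in[E_-^\alpha,E_+^\alpha].
\]
In the bulk the denominator is bounded above and below by constants depending only on the control parameters, so $\rho_{\widehat\mu_\alpha}$ inherits a strictly positive lower bound from~\eqref{le constant C in measures} on any compact subset of the interior, and the integral diverges for any interior $\omega_*$. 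Near an endpoint, say $E_+^\alpha$, a short case analysis on the sign of $t_+^\alpha$ shows that $\rho_{\widehat\mu_\alpha}(u)\sim (E_+^\alpha-u)^{|t_+^\alpha|}$: for $t_+^\alpha\in[0,1)$ the assumption~\eqref{the minis} together with continuity keeps $|m_{\mu_\alpha}|$ bounded away from zero at the edge, while for $t_+^\alpha\in(-1,0)$ the divergence of $|m_{\mu_\alpha}|$ exactly matches that of $\rho_\alpha$ so as to cancel one power. Since $|t_+^\alpha|-2<-1$, the integrand $\rho_{\widehat\mu_\alpha}(u)/(E_+^\alpha-u)^2$ is non-integrable at $E_+^\alpha$, and the symmetric statement holds at $E_-^\alpha$. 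This delivers the required divergence of $\widehat I_\alpha(\omega_*)$, closes the contradiction, and yields~\eqref{le gap}; the bound for $\omega_\alpha$ is obtained by interchanging the roles of the two input measures.

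The main obstacle is the endpoint analysis: one must reconcile the two possible behaviors of $|m_{\mu_\alpha}(u+\ii 0)|$ at a Jacobi edge (bounded for $t_+^\alpha\ge 0$, divergent for $t_+^\alpha<0$) in order to pin down the correct power-law exponent of $\rho_{\widehat\mu_\alpha}$. This is precisely where both the lower bound~\eqref{the minis} on $|m_{\mu_\alpha}|$ outside the support and the strict inequality $|t_\pm^\alpha|<1$ are used essentially; the rest of the argument is a routine compactness-plus-Fatou package.
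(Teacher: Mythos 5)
Your reduction of \eqref{gap bounds everything} to \eqref{le gap}, and the compactness--continuity--Fatou frame for the contradiction, are fine and close in spirit to the paper (which also combines the uniform bound \eqref{det tredje} with a divergence statement for $\widehat I_\alpha$ near $\mathrm{supp}\,\mu_\alpha$ and continuity of $\omega_\beta$). The genuine gap is in your key technical step, where you pass to the boundary density $\rho_{\widehat\mu_\alpha}(u)=\rho_\alpha(u)/|m_{\mu_\alpha}(u+\ii 0)|^2$ and claim pointwise control of $|m_{\mu_\alpha}(u+\ii 0)|$ by the control parameters. In the bulk, the assertion that the denominator is bounded above by a constant is false under Assumption~\ref{a.regularity of the measures}: the density is only assumed to satisfy a.e.\ two-sided bounds, with no continuity, so $\re m_{\mu_\alpha}(u+\ii 0)$ is a Hilbert transform of a merely bounded (possibly wildly discontinuous) function and is only in BMO; it can be unbounded on compact subsets of the interior. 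Consequently your claimed lower bound on $\rho_{\widehat\mu_\alpha}$ in the bulk does not follow. At the edges the situation is worse: to get a lower bound on $\rho_{\widehat\mu_\alpha}$ you need an \emph{upper} bound on $|m_{\mu_\alpha}(u+\ii 0)|$ near $E_\pm^\alpha$, but you invoke \eqref{the minis}, which is a lower bound on $|m_{\mu_\alpha}|$ \emph{outside} the support and controls the wrong direction; and for $t_+^\alpha<0$ the statement that the divergence of $|m_{\mu_\alpha}|$ ``exactly matches'' that of $\rho_\alpha$ presumes exact asymptotics of the boundary values, which the two-sided bounds \eqref{le constant C in measures} do not give (only the imaginary part is pinned down; the real part can carry extra logarithmic or oscillatory contributions on exceptional sets).

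The conclusion you are after, $\widehat I_\alpha(\omega_*)=+\infty$ for every $\omega_*\in\mathrm{supp}\,\mu_\alpha$, is in fact true, but rescuing your route requires an integrated argument (e.g.\ local $L^2$/John--Nirenberg control of the BMO part plus a Cauchy--Schwarz or dyadic decomposition near $\omega_*$) rather than pointwise bounds on $m_{\mu_\alpha}(u+\ii 0)$, none of which you supply. The paper sidesteps boundary values entirely: it writes $\widehat I_\alpha(\omega)=T(\omega)-1$ with
\begin{align*}
T(\omega)=\frac{\int_\R\frac{\dd\mu_\alpha(x)}{|x-\omega|^2}}{\bigl|\int_\R\frac{\dd\mu_\alpha(x)}{x-\omega}\bigr|^2}\,,
\end{align*}
and shows $T(\omega)\nearrow\infty$ as $\omega$ approaches the support \emph{within} $\C^+\cup\R\setminus\mathrm{supp}\,\mu_\alpha$, using the kernel estimates \eqref{from the computational lemma}, \eqref{le diverging2}, \eqref{le diverging3} (lower bound on the Poisson-type integral versus at most logarithmically divergent bounds on $|m_{\mu_\alpha}(\omega)|$ for $\im\omega>0$), for which the a.e.\ two-sided power-law bounds suffice. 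Either adopt that off-axis comparison of numerator and denominator, or supply the BMO-type argument; as written, the endpoint and bulk claims about $\rho_{\widehat\mu_\alpha}$ are unjustified and the proof does not close.
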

\begin{proof}

Recall the definition of $\widehat I_\alpha$ in~\eqref{le I hat}. Using~\eqref{le smanl1} we can write
\begin{align}\label{le IKC}
 \widehat I_\alpha (\omega) = \frac{\im m_{\mu_\alpha}(\omega)}{|m_{\mu_\alpha}(\omega)|^2 \im \omega}-1= \frac{ \int_\R\frac{\dd \mu_\alpha(x)}{|x-\omega|^2} }{ \Big| \int_\R \frac{\dd \mu_\alpha(x)}{x-\omega}\Big|^2}-1\,.
\end{align}
We now claim that $\widehat I_\alpha (\omega)\nearrow\infty$ as $\omega$ approaches $\mathrm{supp}\,\mu_\alpha$ in $\mathcal{D}_\alpha\deq\C^+\cup\R\backslash{\mathrm{supp}\,\mu_\alpha}$. To do so we distinguish two cases: we first study $\widehat I_\alpha(\omega)$ with $\omega$ in a neighborhood of the edges $E_-^\alpha$ respectively $E_+^\alpha$, and then study $\widehat I_\alpha(\omega)$ for $\omega$ inside the bulk separately.

An elementary computation shows that, for $\omega\in\C^+$ satisfying $|\omega-E_-^\alpha|\le \delta$ with some (small) $\delta>0$, 
\begin{align}\label{from the computational lemma}
\int_{\R}\frac{\dd\mu_\alpha(x)}{|x-\omega|^2}\ge c\begin{cases}
                                                                                  \frac{(\re\omega-E_-^\alpha)^{t_-^\alpha}}{\im \omega}\,, &\textrm{ if }\qquad \re\omega-E_-^\alpha>\im \omega\,,\\
                                                                                   (E_-^\alpha-\re\omega)^{t_-^\alpha-1}\,, &\textrm{ if }\qquad \re\omega-E_-^\alpha< -\im\omega\,,\\
                                                                                    (\im \omega)^{t_-^\alpha-1}\,, &\textrm{ if }\qquad \im\omega\ge|\re\omega-E_-^\alpha|\,,
                                                                                   \end{cases}
\end{align}
 for some constant $c>0$ depending on $\delta$; see~\eg Lemma~3.4 in~\cite{BES17}.

For $t_-^\alpha\ge0$, we have for $\omega\in \C^+$ satisfying $|\omega-E_-^\alpha|\le \delta$, the estimate
\begin{align}\label{le diverging2}
    \left|\int_\R \frac{\dd \mu_\alpha(x)}{x-\omega}\right| \le C\begin{cases}
                                                                                  |\log \im \omega|\,, &\textrm{ if }\qquad \re\omega-E_-^\alpha>\im \omega\,,\\
                                                                                   |\log (E_-^\alpha-\re\omega)|\,, &\textrm{ if }\qquad \re\omega-E_-^\alpha< -\im\omega\,,\\
                                                                                    |\log\im \omega|\,, &\textrm{ if }\qquad \im\omega\ge|\re\omega-E_-^\alpha|\,,
                                                                                   \end{cases}
\end{align}
for some constant $C$ depending on $\delta$. For $t_-^\alpha<0$, we have,  for $\omega\in \C^+$ satisfying $|\omega-E_-^\alpha|\le \delta$,
\begin{align}\label{le diverging3}
    \left|\int_\R \frac{\dd \mu_\alpha(x)}{x-\omega}\right| \le C\begin{cases}
                                                |\log \im\omega|\,(\re\omega-E_-^\alpha)^{t_-^\alpha}\,, &\textrm{ if }\qquad \re\omega-E_-^\alpha>\im \omega\,,\\
                                                 |\log  (E_-^\alpha-\re\omega)|\,                                  |\re\omega-E_-^\alpha|^{t_-^\alpha}\,, &\textrm{ if }\qquad \re\omega-E_-^\alpha< -\im\omega\,,\\
                                                                             |\log \im\omega|\,        (\im \omega)^{t_-^\alpha}\,, &\textrm{ if }\qquad \im\omega\ge|\re\omega-E_-^\alpha|\,,
                                                        \end{cases}\end{align}
for some strictly positive constants $C$ depending on $\delta$.

Next, set
\begin{align}\label{mr t}
 T(\omega)\deq \frac{ \int_\R\frac{\dd \mu_\alpha(x)}{|x-\omega|^2} }{ \Big| \int_\R \frac{\dd \mu_\alpha(x)}{x-\omega}\Big|^2}\,,
\end{align}
with $\omega\in\mathcal{D}_\alpha=\C^+\cup\R\backslash\mathrm{supp}\,\mu_\alpha$. We now distinguish the cases $t_-^\alpha\in[0,1)$ and $t_-^\alpha\in(-1,0)$.

For $t_-^\alpha\in[0,1)$, we conclude from ~\eqref{from the computational lemma} and \eqref{le diverging2} that there is $c'>0$ such that 
\begin{align}
T(\omega)\ge c'\begin{cases}
                                                                                  \frac{(\re\omega-E_-^\alpha)^{t_-^\alpha}}{\im \omega (\log \im\omega)^2}\,, &\textrm{ if }\qquad \re\omega-E_-^\alpha>\im \omega\,,\\
                                                                                   \frac{|\re\omega-E_-^\alpha|^{t_-^\alpha-1}}{|\log (E_-^\alpha-\re\omega)|^2}\,, &\textrm{ if }\qquad \re\omega-E_-^\alpha< -\im\omega\,,\\
                                                                                    \frac{(\im \omega)^{t_-^\alpha-1}}{(\log \im\omega)^2}\,, &\textrm{ if }\qquad \im\omega\ge|\re\omega-E_-^\alpha|\,,
                                                                                   \end{cases}
\end{align}
when $|\omega-E_-^\alpha|\le \delta$, and hence  $T(\omega)\nearrow \infty$ as $\omega\rightarrow E'\in[E_-^\alpha,E_-^\alpha+\delta)$ in $\mathcal{D}_\alpha$, since $t_-^\alpha<1$.

For $t_-^\alpha\in(-1,0)$, we conclude from ~\eqref{from the computational lemma} and \eqref{le diverging3} that there is $c'>0$ such that
\begin{align}
T(\omega)\ge c'\begin{cases}\frac{(\re \omega-E_-^\alpha)^{-t_-^\alpha}}{|\log\im\omega|^2\,\im \omega}\,, &\textrm{ if }\qquad \re\omega-E_-^\alpha>\im \omega\,,\\
                                                                                  \frac{ |\re\omega-E_-^\alpha|^{-t_-^\alpha-1}}{|\log (E_-^\alpha-\re\omega)|^2}\,, &\textrm{ if }\qquad \re\omega-E_-^\alpha<-\im\omega\,,\\
                                                                                    \frac{(\im \omega)^{-t_-^\alpha-1}}{|\log \im\omega|^2}\,, &\textrm{ if }\qquad \im\omega\ge|\re\omega-E_-^\alpha|\,,
                                                                                   \end{cases}
\end{align}
when $|\omega-E_-^\alpha|\le \delta$, and hence  $T(\omega)\nearrow \infty$ as $\omega\rightarrow E'\in[E_-^\alpha,E_-^\alpha+\delta)$, since $t_-^\alpha\in(-1,0)$.

The same argument shows, as $t_+^\alpha\in(-1,1)$, that $\widehat I_\alpha(\omega)$ diverges when $\omega$ approaches $(E_+^\alpha-\delta,E_+^\alpha]$ in $\mathcal{D	}_\alpha$.

Next, we know that the density $\rho_\alpha$ of $\mu_\alpha$ is  a.e.\ positive on $[E_-^\alpha+\delta',E_+^\alpha-\delta']$, with $0<\delta'\le \delta$. Hence there is a constant $c>0$, depending on $\delta'$, such that
\begin{align}
 \int_{\R}\frac{\dd\mu_\alpha(x)}{|x-\omega|^2}\ge c\,\frac{1}{\im \omega}\,,
\end{align}
for all $\omega\in\mathcal{D}_\alpha$ satisfying $\mathrm{dist}(\omega,[E_-^\alpha+\delta,E_+^\alpha-\delta])\le \delta'/2$. On the other hand, as the density~$\rho_\alpha$ is a.e.\ finite in the bulk, there  are constants $c$ and $C$, depending on $\delta'$, such that
\begin{align}
  \Big|\int_\R \frac{\dd \mu_\alpha(x)}{x-\omega}\Big|\le C+c\int_{E_-^\alpha+\delta}^{E_+^\alpha-\delta}\frac{\dd x}{|x-\omega|} \le C+c\,|\log\im \omega|\,,
\end{align}
for all $\omega\in\mathcal{D}_\alpha$ with $\mathrm{dist}(\omega,[E_-^\alpha+\delta,E_+^\alpha-\delta])\le \delta'/2$. Thus, from~\eqref{mr t}, we conclude that $T(\omega)$ diverges when $\omega$ tends to any $E\in[E_-^\alpha+\delta,E_+^\alpha-\delta]$ within $\mathcal{D}_\alpha$. Hence so must $\widehat I_\alpha(\omega)$.

However, since $\widehat I_\alpha(\omega_\beta(z))$ is uniformly bounded for all $z\in\mathcal{E}$ by~\eqref{det tredje}, we conclude that $\omega=\omega_\beta(z)$ must be separated outside $\mathrm{supp}\,\mu_\alpha$. That is $d_\alpha(\omega_\beta(z))$ must, by the continuity of $\omega_\beta$, be uniformly bounded from below on $\mathcal{E}$.

Once~\eqref{le gap} has been established,~\eqref{gap bounds everything} follows directly from the definitions of $I_\alpha$ and $I_\beta$ in~\eqref{le I ohne hat}.
\end{proof}

In sum, we have proved so far that there is a constant $C\ge 1$, depending only on $\mu_\alpha$ and~$\mu_\beta$ via their control parameters, such that
\begin{align}\label{Ialpha}
 C^{-1}\le I_\alpha(\omega_\beta(z))\le C\,,\qquad C^{-1}\le I_\beta(\omega_\alpha(z))\le C\,,\qquad\qquad z\in\mathcal{E}\,.
\end{align}
Thence, recalling~\eqref{le to be noted}, we observe that the imaginary parts of $m(z)\equiv m_{\mu_\alpha\boxplus\mu_\beta}(z)$, $\omega_\alpha(z)$ and $\omega_\beta(z)$ are all comparable
 on the domain $\mathcal{E}$, which proves~\eqref{le links} for some constant $C\ge 1$ depending only on~$\mu_\alpha$ and $\mu_\beta$. This concludes the proof of Proposition~\ref{le pro links}.

\section{Characterization of (regular) edges and Proof of Theorem~\ref{le theorem 1}} \label{section 4}

Recall that $\rho(x)$ denotes the (continuous) density function at $x\in\R$ of the free additive convolution measure $\mu_\alpha\boxplus\mu_\beta$ and that $m(z)$ is its Stieltjes transform at $z\in\C^+$. From~\eqref{le to be noted} and~\eqref{Ialpha} we note that $\im\omega_\alpha(E+\ii\eta)$ and $\im\omega_\beta(E+\ii\eta)$ vanish in the limit $\eta\searrow 0$ if $\rho(E)=0$. The next result shows that the ratio of the imaginary parts of the subordination functions has a finite and positive limit as the spectral parameter approaches the real line.

\begin{lem}\label{le limit along the imaginary axis lem} Suppose that $\mu_\alpha$ and $\mu_\beta$ satisfy Assumption~\ref{a.regularity of the measures}. Let $E\in\mathcal{J}$; see~\eqref{le J}. Then
 \begin{align}\label{le limit along the imaginary axis}
  \lim_{\eta\searrow 0}\frac{\im\omega_\alpha(E+\ii\eta)}{\im\omega_\beta(E+\ii\eta)}=\frac{I_\alpha(\omega_\beta(E))}{I_{\beta}(\omega_\alpha(E))}\,.
 \end{align}
 The limit, which is bounded from above and from below by strictly positive constants
 (see \eqref{Ialpha}), is a continuous function in $E$. The corresponding statements hold with the roles of the indices $\alpha$ and $\beta$ interchanged.
\end{lem}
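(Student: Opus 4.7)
The plan is to observe that the ratio in \eqref{le limit along the imaginary axis} is in fact \emph{exactly} equal to $I_\alpha(\omega_\beta(z))/I_\beta(\omega_\alpha(z))$ for every $z \in \C^+$, so the lemma reduces to passing this quotient to the boundary $\eta \searrow 0$. Indeed, the two identities in \eqref{le to be noted}, valid on $\C^+$, give
\begin{align*}
 \frac{\im \omega_\alpha(z)}{\im \omega_\beta(z)} \;=\; \frac{I_\alpha(\omega_\beta(z))}{I_\beta(\omega_\alpha(z))}\,, \qquad z \in \C^+\,,
\end{align*}
so that the lemma is reduced to showing that the right-hand side extends continuously from $\C^+$ to $\mathcal{J}$.

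The continuous extension is handled by two ingredients that have already been established in Section~\ref{section 3}. First, by Proposition~\ref{prop extension} together with the uniform bound in Lemma~\ref{lemma bound on the subi}, the subordination functions $\omega_\alpha$ and $\omega_\beta$ extend continuously from $\C^+$ to the closure $\mathcal{E}$ with finite values in $\C^+ \cup \R$. Second, by the gap estimate \eqref{le gap} of Lemma~\ref{gap corollary}, there is a constant $g > 0$ such that
\begin{align*}
 \mathrm{dist}\bigl(\omega_\beta(z),\mathrm{supp}\,\mu_\alpha\bigr) \ge g\,, \qquad \mathrm{dist}\bigl(\omega_\alpha(z),\mathrm{supp}\,\mu_\beta\bigr) \ge g\,, \qquad z \in \mathcal{E}\,.
\end{align*}
Consequently the integrand $x \mapsto |x - \omega_\beta(E + \ii\eta)|^{-2}$ defining $I_\alpha(\omega_\beta(E + \ii\eta))$ is uniformly bounded on $\mathrm{supp}\,\mu_\alpha$ by $g^{-2}$ for all $\eta \ge 0$, and converges pointwise in $x$ to $|x - \omega_\beta(E)|^{-2}$ as $\eta \searrow 0$. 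Dominated convergence therefore yields
\begin{align*}
 \lim_{\eta \searrow 0} I_\alpha\bigl(\omega_\beta(E + \ii\eta)\bigr) \;=\; I_\alpha\bigl(\omega_\beta(E)\bigr)\,,
\end{align*}
and the symmetric statement with $\alpha,\beta$ swapped. Combined with the two-sided bound \eqref{Ialpha}, which now extends to $\eta = 0$ by continuity, this proves \eqref{le limit along the imaginary axis} and places the limit in $[C^{-2},C^2]$.

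The same dominated convergence argument, applied to an arbitrary sequence $E_n \to E$ within $\mathcal{J}$ and using the continuity of $\omega_\alpha$ and $\omega_\beta$ on $\mathcal{E}$, shows that $E \mapsto I_\alpha(\omega_\beta(E))$ and $E \mapsto I_\beta(\omega_\alpha(E))$ are continuous on $\mathcal{J}$; since the denominator is bounded away from zero, so is their quotient. The proof is therefore essentially bookkeeping: the algebraic identity is immediate from \eqref{le to be noted}, and the analytic input is a one-line application of dominated convergence. No serious obstacle is anticipated, since the two substantive inputs (continuous boundary extension of the subordination functions and the uniform gap $d_\alpha(\omega_\beta),d_\beta(\omega_\alpha) \ge g$) are already in place.
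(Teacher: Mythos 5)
Your proposal is correct and follows essentially the same route as the paper: rewrite the ratio via \eqref{le to be noted} as $I_\alpha(\omega_\beta(z))/I_\beta(\omega_\alpha(z))$ for $\eta>0$, then pass to the limit using the continuous boundary extension of the subordination functions, the gap estimate \eqref{le gap}, the lower bound \eqref{det forsta}, and dominated convergence. The only cosmetic difference is that you spell out the domination and the continuity in $E$ slightly more explicitly than the paper does.
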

\begin{rem}
 In~\eqref{le limit along the imaginary axis} we take the limit $z=E+i\eta\to E$  in the
  direction of the imaginary axes, yet as $\omega_\alpha$ and $\omega_\beta$ extend continuously to the real axis by~Proposition~\ref{prop extension}, we obtain the same limit when taken along any non-tangential direction to $\R$. 
\end{rem}
\begin{proof}[Proof of Lemma~\ref{le limit along the imaginary axis lem}] 
For $z\in\mathcal{E}\backslash \mathcal{J}$,  we have $ \im m(z)\ne 0$, thus from~\eqref{le to be noted},
\begin{align}
 \im \omega_\alpha(z)=\frac{\im m(z)}{I_\beta({\omega_\alpha(z)})}\,,\qquad\quad\im \omega_\beta(z)=\frac{\im m(z)}{I_\alpha({\omega_\beta(z)})}\,,
\end{align}
hence
\begin{align}\label{le haifisch}
 \frac{\im \omega_\alpha(E+\ii\eta)}{\im \omega_\beta(E+\ii\eta)}=\frac{I_\alpha(\omega_\beta(E+\ii\eta))}{I_\beta(\omega_\alpha(E+\ii\eta))}=\frac{\int_\R\frac{\dd\mu_\alpha(y)}{|y-\omega_\beta(E+\ii\eta)|^2}}{\int_\R\frac{\dd\mu_\beta(y)}{|y-\omega_\alpha(E+\ii\eta)|^2}}\,,\qquad \qquad \eta>0\,.
\end{align}
 From Lemma~\ref{gap corollary} we know that $\omega_\alpha(z)$ and $\omega_\beta(z)$ stay away from the support of the measures~$\mu_\beta$, respectively $\mu_\alpha$ for all $z\in\mathcal{E}$, by the continuity of $\omega_\alpha$ and $\omega_\beta$ and dominated convergence, we can take the limit $\eta\searrow 0$ in~\eqref{le haifisch} and conclude that the limit is a finite strictly positive number by~\eqref{det forsta} and Lemma~\ref{gap corollary}. Continuity of the limit is immediate.
 \end{proof}

   We are now ready to characterize the (regular) edges of the measure $\mu_\alpha\boxplus\mu_\beta$. For brevity, we use the following notation: Denote the set of vanishing points, $\mathcal{V}$, of $\mu_\alpha\boxplus\mu_\beta$ by
\begin{align}\label{le cal B}
 \mathcal{V}\deq\partial\{x\in\R\,:\,\rho(x)>0\}\,,
\end{align}
where $\rho$ denotes the density of $\mu_\alpha\boxplus\mu_\beta$. We remark that $\mathcal{V}$ is not  necessarily the boundary of the support of
$\mu_\alpha\boxplus\mu_\beta$. It may happen that $\mathcal{V}$ contains isolated zeros: $x\in\R$ is 
 called an {\it isolated zero} if $\rho(x)=0$, and $\rho(x+\epsilon)>0$ and $\rho(x-\epsilon)>0$, for all $\epsilon\in(0,\epsilon_0)$, for some $\epsilon_0>0$. 
 However, we will prove below
in Proposition~\ref{prop. support is an interval} that $\mu_\alpha\boxplus\mu_\beta$ does not have isolated zeros under Assumption~\ref{a.regularity of the measures}.

\begin{pro}\label{prop. woises}
Suppose that $\mu_\alpha$ and $\mu_\beta$ satisfy Assumption~\ref{a.regularity of the measures}. Then we have
\begin{align}\label{le equal 1}
 \big|(F'_{\mu_\alpha}(\omega_\beta(z))-1)(F'_{\mu_\beta}(\omega_\alpha(z))-1)\big|\le 1\,,
\end{align}
for all $z\in\C^+\cup\R$; see~\eqref{le domain}. Moreover, we have equality in~\eqref{le equal 1} with $z=E+\ii\eta\in\mathcal{E}$ if and only if the spectral parameter $z$ satisfies
\begin{align}\label{le edge ta}
 E\in\mathcal{V}\,,\qquad\quad \eta=0\,.
\end{align}
In fact, for such $E$, we have that $(F'_{\mu_\alpha}(\omega_\beta(E))-1)(F'_{\mu_\beta}(\omega_\alpha(E))-1)=1$.
\end{pro}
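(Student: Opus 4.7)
The plan is as follows. The inequality \eqref{le equal 1} follows from the triangle inequality applied to the integral representation \eqref{we can do what we want}: indeed $|F'_{\mu_\alpha}(\omega_\beta(z)) - 1| = \bigl|\int_\R (x-\omega_\beta(z))^{-2}\,\dd\widehat\mu_\alpha(x)\bigr| \le \widehat I_\alpha(\omega_\beta(z))$, and analogously for $\beta$; multiplying the two bounds and invoking \eqref{smaller than one} yields the claim.

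For the forward direction of the equality characterization, suppose equality holds in \eqref{le equal 1} at $z = E + \ii\eta \in \mathcal{E}$. Equality in the above triangle inequalities requires the phase of $(x - \omega_\beta(z))^{-2}$ to be constant for $\widehat\mu_\alpha$-almost every $x$; since $\mathrm{supp}\,\widehat\mu_\alpha = \mathrm{supp}\,\mu_\alpha$ is a non-degenerate interval by \eqref{same support}, this is only possible if $\omega_\beta(z) \in \R$, and similarly $\omega_\alpha(z) \in \R$. Proposition~\ref{le prop 1}$(i)$ then forces $\eta = 0$, and \eqref{le to be noted} together with \eqref{det forsta} gives $\im m(E) = 0$, hence $\rho(E) = 0$. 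Now assume for contradiction that $E \notin \mathcal{V}$, so $\rho \equiv 0$ on a real neighborhood $I$ of $E$. Then $\omega_\alpha, \omega_\beta$ are real on $I$, extend analytically through $I$ by Schwarz reflection, and stay separated from the respective supports by Lemma~\ref{gap corollary}. Combining \eqref{le definiting equations} and \eqref{it is Stieltjes transform} yields the identity $\omega_\alpha(z) = z + \int (x-\omega_\beta(z))^{-1}\,\dd\widehat\mu_\alpha(x)$, and its $\beta$-analog; differentiating both at $E$ gives $a - 1 = b\,\widehat I_\alpha(\omega_\beta(E))$ and $b - 1 = a\,\widehat I_\beta(\omega_\alpha(E))$ with $a \deq \omega_\alpha'(E)$, $b \deq \omega_\beta'(E)$. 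Consequently $(F'_{\mu_\alpha}(\omega_\beta(E))-1)(F'_{\mu_\beta}(\omega_\alpha(E))-1) = \widehat I_\alpha(\omega_\beta(E))\,\widehat I_\beta(\omega_\alpha(E)) = (a-1)(b-1)/(ab)$, which equals $1$ only if $a + b = 1$. However, $\omega_\alpha - z$ is a Nevanlinna function (by Proposition~\ref{le prop 1}$(i)$, $\im(\omega_\alpha - z) \ge 0$) that extends analytically through $I$, so its Lemma~\ref{lemma pick} representing measure avoids $I$; the integral representation of the derivative then gives $a \ge 1$, and likewise $b \ge 1$. Thus $a + b \ge 2$, contradicting $a + b = 1$, and hence $E \in \mathcal{V}$.

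For the converse and the sign claim, fix $E \in \mathcal{V}$ with $\eta = 0$. Then $\rho(E) = 0$ yields $\omega_\alpha(E), \omega_\beta(E) \in \R$ as above, both separated from the supports of $\mu_\beta$ and $\mu_\alpha$ respectively by Lemma~\ref{gap corollary}. View \eqref{le definiting equations} as $\Phi(\omega_\alpha, \omega_\beta, z) = 0$ with $\Phi_1 = F_{\mu_\alpha}(\omega_\beta) - F_{\mu_\beta}(\omega_\alpha)$ and $\Phi_2 = \omega_\alpha + \omega_\beta - z - F_{\mu_\alpha}(\omega_\beta)$; a direct computation of the Jacobian yields $\det D_{(\omega_\alpha, \omega_\beta)} \Phi = (F'_{\mu_\alpha}(\omega_\beta)-1)(F'_{\mu_\beta}(\omega_\alpha)-1) - 1$. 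If this determinant were nonzero at $(\omega_\alpha(E), \omega_\beta(E), E)$, the holomorphic implicit function theorem would produce a holomorphic solution of \eqref{le definiting equations} on a complex neighborhood of $E$ that---by uniqueness and the continuity granted by Proposition~\ref{prop extension}---coincides with $(\omega_\alpha, \omega_\beta)$ on $\C^+$, and is real on the real axis by Schwarz reflection. Then $m = -1/F_{\mu_\alpha}(\omega_\beta(\cdot))$ would be real-analytic on a real neighborhood of $E$, forcing $\rho \equiv 0$ there and contradicting $E \in \partial\{\rho > 0\}$. Hence the determinant vanishes, giving $(F'_{\mu_\alpha}(\omega_\beta(E))-1)(F'_{\mu_\beta}(\omega_\alpha(E))-1) = 1$; since each factor equals $\int (x-\omega)^{-2}\,\dd\widehat\mu(x) > 0$ for the corresponding real $\omega$ sitting outside the support, the product is in fact $+1$, establishing the final claim. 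The main obstacle is ensuring, in this last step, that the holomorphic IFT extension really coincides with the subordination branch from $\C^+$; this is handled via the continuous extension of $\omega_\alpha, \omega_\beta$ to the real line (Proposition~\ref{prop extension}) together with the uniqueness statement of the IFT.
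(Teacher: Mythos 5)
Your proof is correct, and while the inequality part and the step ``equality $\Rightarrow\eta=0$ and $\rho(E)=0$'' coincide with the paper's argument, both remaining directions of the equality characterization take a genuinely different route. (One small citation slip: to deduce $\im m(E)=0$ from $\im\omega_\beta(E)=0$ via \eqref{le to be noted} you need the upper bound \eqref{gap bounds everything} on $I_\alpha$, not the lower bound \eqref{det forsta}; the needed bound is of course available.) For ``equality $\Rightarrow E\in\mathcal{V}$'' the paper stays with boundary limits: from \eqref{le super trup} it deduces $\eta_0/\im\omega_\beta(E+\ii\eta_0)\to 0$, hence $\im m(E+\ii\eta_0)/\eta_0\to\infty$, which is impossible off the support. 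You instead argue by contradiction off the support, differentiating the subordination identity to get $\omega_\alpha'(E)-1=\omega_\beta'(E)\,\widehat I_\alpha(\omega_\beta(E))$ and its counterpart, and combining these with $\omega_\alpha'(E),\omega_\beta'(E)\ge 1$ --- a monotonicity fact the paper only records later in \eqref{le omegas are increasing}, but which you reprove self-containedly via the Pick representation, so there is no circularity --- to contradict $\widehat I_\alpha(\omega_\beta(E))\,\widehat I_\beta(\omega_\alpha(E))=1$. For the converse, the paper approaches $E$ from the side where $\rho>0$, where \eqref{le super trup} gives $\widehat I_\alpha\,\widehat I_\beta=1$ up to errors vanishing as $\eta_0\searrow 0$, and concludes by continuity; you instead invoke the holomorphic implicit function theorem: a nonvanishing Jacobian $(F'_{\mu_\alpha}(\omega_\beta)-1)(F'_{\mu_\beta}(\omega_\alpha)-1)-1$ at the real point would continue $m$ analytically and real-valued through $E$, forcing $\rho\equiv 0$ nearby and contradicting $E\in\partial\{\rho>0\}$; the delicate point --- identifying the IFT branch with the subordination functions --- is handled correctly via local uniqueness together with the continuous extension from Proposition~\ref{prop extension} and the reflection symmetry of the system. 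Both substitutes are sound; your IFT route delivers $(F'_{\mu_\alpha}(\omega_\beta(E))-1)(F'_{\mu_\beta}(\omega_\alpha(E))-1)=1$ directly (making your closing sign discussion redundant), whereas the paper's limit arguments avoid any analytic-continuation machinery and work purely with boundary values.
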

\begin{proof}
Inequality~\eqref{le equal 1} was proved in~\cite{BB} using the concept of Denjoy--Wolff points. Here we give an elementary direct argument. Recalling~\eqref{we can do what we want}, we indeed note that
 \begin{align}\label{big one}
  \big|(F'_{\mu_\alpha}(\omega_\beta(z))-1)(F'_{\mu_\beta}(\omega_\alpha(z))-1)\big|&=\Big|\int_\R\frac{\dd\widehat\mu_\alpha(x)}{(x-\omega_\beta(z))^2}\int_\R\frac{\dd\widehat\mu_\beta(x)}{(x-\omega_\alpha(z))^2}\Big|\nonumber\\
  &\stackrel{(i)}{ \le } \int_\R\frac{\dd\widehat\mu_\alpha(x)}{|x-\omega_\beta(z)|^2}\int_\R\frac{\dd\widehat\mu_\beta(x)}{|x-\omega_\alpha(z)|^2}\nonumber\\
  &=\widehat I_\alpha(\omega_\beta(z))\widehat I_\beta(\omega_\alpha(z))\nonumber\\
  &\mathrel{\stackrel{(ii)}{ \le}}1\,,
 \end{align}
 for all $z\in\mathcal{E}$, where we used~\eqref{le I hat}
 and \eqref{smaller than one}. We are now interested in the case when we have equality in~\eqref{le equal 1}.

 Next, assuming equality in $(i)$ and $(ii)$, we show that~\eqref{le edge ta} holds. First, recall from~\eqref{same support} that the measures $\mu_\alpha$ and $\widehat\mu_\alpha$ have the same support. Since $\mu_\alpha$ is supported on an interval, so must $\widehat\mu_\alpha$ be. Similarly for $\widehat\mu_\beta$. Since the subordination functions are uniformly bounded on~$\mathcal{E}$, we get equality in $(i)$ of~\eqref{big one} if and only if $z$ is such that $\im \omega_\alpha(z)=\im\omega_\beta(z)=0$. This entails, as $\im\omega_\alpha(z)\ge\im z$, $\im\omega_\beta(z)\ge\im z$, $z\in\C^+$ by Proposition~\ref{le prop 1} (and continuous extension to the real line), that such a $z\in\mathcal{E}$ must lie on the real line, \ie $\eta=0$.

To get the first part of~\eqref{le edge ta}, we note that from the definition of $\widehat I_\alpha$ and $\widehat I_\beta$  in~\eqref{le I hat} and~\eqref{le super trup}, we have
\begin{align}
\widehat I_\alpha(\omega_\beta(z))=\frac{\im \omega_\alpha(z)}{\im\omega_\beta(z)}-\frac{\im z}{\im\omega_\beta(z)}=\frac{I_\alpha(\omega_\beta(z))}{I_\beta(\omega_\alpha(z))}-\frac{\im z}{\im\omega_\beta(z)}\,,
\end{align}
where we used~\eqref{le haifisch} to get the second equality, and similarly with the roles of $\alpha$ and $\beta$ interchanged. Thus with $z=x+\ii\eta_0$, $x\in \R$, we can take, by Lemma~\ref{le limit along the imaginary axis lem}, $\eta_0$ to zero to get
\begin{align}
 \widehat I_\alpha(\omega_\beta(x))=\frac{I_\alpha(\omega_\beta(x))}{I_\beta(\omega_\alpha(x))}-\lim_{\eta_0\searrow 0}\frac{\eta_0}{\im\omega_\beta(x+\ii\eta_0)}\,.
\end{align}
Since $\widehat I_\alpha(\omega_\beta(x))>0$, the right side is strictly positive as well.

Let now $x=E$ be such that we have equality in $(i)$ and $(ii)$ in \eqref{big one}. Then we have
\begin{align*}
 1&=\widehat I_\alpha(\omega_\beta(E))\widehat I_\beta(\omega_\alpha(E))\\
 &=\left(\frac{I_\alpha(\omega_\beta(E))}{I_\beta(\omega_\alpha(E))}-\lim_{\eta_0\searrow 0}\frac{\eta_0}{\im\omega_\beta(E+\ii\eta_0)}\right)\left(\frac{I_\beta(\omega_\alpha(E))}{I_\alpha(\omega_\beta(E))}-\lim_{\eta_0\searrow 0}\frac{\eta_0}{\im\omega_\alpha(E+\ii\eta_0)}\right)
\end{align*}
We therefore conclude as both factors in the above product are positive that 
\begin{align}\label{le limit behavior}
 \lim_{\eta_0\searrow 0}\frac{\eta_0}{\im\omega_\beta(E+\ii\eta_0)}=\lim_{\eta_0\searrow 0}\frac{\eta_0}{\im\omega_\beta(E+\ii\eta_0)}=0\,.
\end{align}

Summarizing, we have so far proved that if there is $z=E+\ii\eta\in\mathcal{E}$ such that we have equality in $(i)$ and $(ii)$, then $\eta=0$, $\im \omega_\alpha(E)=\im\omega_\beta(E)=0$, and~\eqref{le limit behavior} hold.

It remains to show that such $E$ belongs to $\mathcal{V}$. From~\eqref{le to be noted}, we see that $\im\omega_\alpha(E)=0$ implies $\im m(E)=0$, \ie $E\in\{x\in\R\,:\,\rho(x)=0\}$. Moreover, from~\eqref{le limit behavior}, we have that 
\begin{align}
 \lim_{\eta_0\searrow 0}\frac{\im\omega_\beta(E+\ii\eta_0)}{\eta_0}=\infty\,,
\end{align}
which in turn means by~\eqref{le to be noted} and by~\eqref{Ialpha} that
\begin{align}
 \lim_{\eta_0\searrow 0}\frac{\im m(E+\ii\eta_0)}{\eta_0}=\infty\,.
\end{align}
However, if $E$ were in the complement of the support of $ \rho(x)\dd x= \dd\mu_\alpha\boxplus\mu_\beta(x)$, then 
\begin{align}
 \frac{\im m(E+\ii\eta_0)}{\eta_0}=\int_\R\frac{\dd\rho(x)}{|x-E|^2+\eta_0^2}\,,
\end{align}
would remain bounded as $\eta_0\searrow 0$. Thus $E\in\mathrm{supp}\,\mu_\alpha\boxplus\mu_\beta\cap\{x\in\R\,:\,\rho(x)=0\}$, \ie $E\in\mathcal{V}$.

 We now prove the converse:~\eqref{le edge ta} implies equality in $(i)$ and $(ii)$ in \eqref{big one}.
 Since $\rho(E)=0$, we also have $\im m(E)=0$ and $\im\omega_\alpha(E)=\im\omega_\beta(E)=0$  by Proposition~\ref{le pro links}. This gives equality in~$(i)$. 

 Next, as $E\in\mathcal{V}$, there is $\epsilon_0>0$ such that $\rho(E-\epsilon)>0$ or $\rho(E+\epsilon)>0$, for all $0<\epsilon\le\epsilon_0$. Assume first that $\rho(E-\epsilon)>0$, $0<\epsilon\le\epsilon_0$. Then, for a fixed $\epsilon>0$, we have by~\eqref{le to be noted} and~\eqref{Ialpha} that
\begin{align}\label{le ha goes to zero}
 \lim_{\eta_0\searrow 0}\frac{\eta_0}{\im \omega_\alpha(E-\epsilon+\ii\eta_0)}=\lim_{\eta_0\searrow 0}\frac{\eta_0}{\im \omega_\beta(E-\epsilon+\ii\eta_0)}=0\,.
\end{align}
Thus, by~\eqref{le super trup}, for any $\eta_0>0$,
\begin{align*}
 \widehat I_\alpha(\omega_\beta(E-\epsilon+\ii\eta_0))\,\widehat I_\beta(\omega_\alpha(E-\epsilon+\ii\eta_0))&=\left(\frac{\im\omega_\alpha(E-\epsilon+\ii\eta_0)}{\im\omega_\beta(E-\epsilon+\ii\eta_0)}-\frac{\eta_0}{\im\omega_\beta(E-\epsilon+\ii\eta_0)}\right)\nonumber\\ &\,\quad\cdot\left(\frac{\im\omega_\beta(E-\epsilon+\ii\eta_0)}{\im\omega_\alpha(E-\epsilon+\ii\eta_0)}-\frac{\eta_0}{\im\omega_\alpha(E-\epsilon+\ii\eta_0)}\right)
\end{align*}
Taking the limit $\eta_0\searrow 0$, for fixed $\epsilon>0$, we find from~\eqref{le ha goes to zero} that
\begin{align}
 \widehat I_\alpha(\omega_\beta(E-\epsilon))\,\widehat I_\beta(\omega_\alpha(E-\epsilon))&=\frac{\im\omega_\alpha(E-\epsilon)}{\im\omega_\beta(E-\epsilon)}\,\frac{\im\omega_\beta(E-\epsilon)}{\im\omega_\alpha(E-\epsilon)}=1\,.
\end{align}

Next, using the continuity of $\omega_\alpha$ and $\omega_\beta$, and that they are separated from the support of $\mu_\beta$, respectively $\mu_\alpha$, we have by continuity that
\begin{align}
\widehat I_\alpha(\omega_\beta(E))\,\widehat I_\beta(\omega_\alpha(E))=\lim_{\epsilon\searrow 0}\widehat I_\alpha(\omega_\beta(E-\epsilon))\,\widehat I_\beta(\omega_\alpha(E-\epsilon)) =1\,,
\end{align}
and we obtain equality in~$(ii)$ assuming that $\rho(E-\epsilon)>0$, $0<\epsilon\le\epsilon_0$. The case $\rho(E+\epsilon)>0$, $0<\epsilon\le\epsilon_0$, is handled in the same way and we find that in either case we have equality in~$(ii)$. Thus we have proved that~\eqref{le edge ta} implies equality in $(i)$ and $(ii)$ in \eqref{big one}.

Finally, since $\omega_\alpha(E)$ and $\omega_\beta(E)$ are real valued for any $E\in\mathcal{V}$ (see~\eqref{le to be noted} and use~\eqref{Ialpha}), and since they are separated from the support of the respective measure $\mu_\beta$ and $\mu_\alpha$, we conclude the $(F'_{\mu_\alpha}(\omega_\beta(E))-1)(F'_{\mu_\beta}(\omega_\alpha(E))-1)$ is real valued, and hence by~\eqref{we can do what we want} positive. Thence we must have $(F'_{\mu_\alpha}(\omega_\beta(E))-1)(F'_{\mu_\beta}(\omega_\alpha(E))-1)=1$ for all $E\in\mathcal{V}$. This finishes the proof of Proposition~\ref{prop. woises}.
\end{proof} 

We need one more technical lemma before we can move on to the proof of Theorem~\ref{le theorem 1}.
\begin{lem}\label{le nu lemma} Let $\omega_\alpha$ and $\omega_\beta$ be the subordination functions associated to $\mu_\alpha$ and $\mu_\beta$ by Proposition~\ref{le prop 1}. Then there exist finite Borel measures $\nu_\alpha$ and $\nu_\beta$ on $\R$ such that
\begin{align}\label{le sa1}
 \omega_\alpha(z)-z=\int_\R\frac{\dd\nu_\alpha(x)}{x-z}\,,\qquad \omega_\beta(z)-z=\int_\R\frac{\dd\nu_\beta(x)}{x-z}\,,
\end{align} 
for any $z$ outside the corresponding supports, where
\begin{align}\label{le sa2}
 0<\nu_\alpha(\R)=\int_\R x^2\dd\mu_\alpha(x)\,,\qquad 0<\nu_\beta(\R)=\int_\R x^2\dd\mu_\beta(x)\,.
\end{align}
Moreover we have
\begin{align}\label{le sa3}
 \mathrm{supp}\,\nu_\alpha=\mathrm{supp}\,\nu_\beta=\mathrm{supp}\,\mu_\alpha\boxplus\mu_\beta\,,
\end{align}
thus \eqref{le sa1} holds for any $z\in \C\setminus \mathrm{supp}\,\mu_\alpha\boxplus\mu_\beta$.

\end{lem}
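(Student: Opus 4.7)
The plan is to observe that $f_\alpha(z) \deq \omega_\alpha(z) - z$ is a Pick function on $\C^+$, since $\im f_\alpha(z) = \im \omega_\alpha(z) - \im z \ge 0$ by Proposition~\ref{le prop 1}(i). Lemma~\ref{lemma pick} will then furnish a representation
\begin{align*}
 \omega_\alpha(z) - z = a_\alpha + b_\alpha z + \int_\R \left( \frac{1}{x-z} - \frac{x}{1+x^2} \right) \dd\nu_\alpha(x)\,, \qquad z \in \C^+\,,
\end{align*}
for some $a_\alpha \in \R$, $b_\alpha \ge 0$, and a positive Borel measure $\nu_\alpha$ with $\int \dd\nu_\alpha(x)/(1+x^2) < \infty$. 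The proof will then parallel that of Lemma~\ref{lemm on hat measures}: first pin down the Pick parameters via the asymptotic behavior at $\ii\infty$, and then identify $\mathrm{supp}\,\nu_\alpha$ via Stieltjes inversion.

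To fix the parameters, the key identity comes from the second equation in \eqref{le definiting equations} combined with Lemma~\ref{lemm on hat measures}:
\begin{align*}
 \omega_\alpha(z) - z = F_{\mu_\alpha}(\omega_\beta(z)) - \omega_\beta(z) = \int_\R \frac{\dd\widehat\mu_\alpha(x)}{x - \omega_\beta(z)}\,.
\end{align*}
Using $\omega_\beta(\ii\eta)/(\ii\eta) \to 1$ from \eqref{le limit of omega} together with the fact that $\widehat\mu_\alpha$ is compactly supported with $\widehat\mu_\alpha(\R) = \int x^2 \dd\mu_\alpha$ (Lemma~\ref{lemm on hat measures}), I would expand the integrand as a geometric series in $1/\omega_\beta(\ii\eta)$ to obtain
\begin{align*}
 \omega_\alpha(\ii\eta) - \ii\eta = \frac{\ii}{\eta} \int_\R x^2 \dd\mu_\alpha(x) + O(\eta^{-2})\,, \qquad \eta \nearrow \infty\,,
\end{align*}
and match this against the analogous expansion of the Pick representation at $\ii\eta$. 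This will force $b_\alpha = 0$, $a_\alpha = \int x \dd\nu_\alpha(x)/(1+x^2)$, and $\nu_\alpha(\R) = \int x^2 \dd\mu_\alpha(x) > 0$ (positive since $\mu_\alpha$ is not a point mass by Assumption~\ref{a.regularity of the measures}), reducing the Pick formula to \eqref{le sa1} and establishing \eqref{le sa2}.

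For the support identity \eqref{le sa3}, the Stieltjes inversion formula reduces matters to showing that the boundary value $\lim_{\eta \searrow 0} \im \omega_\alpha(E + \ii\eta)$ is strictly positive precisely on $\mathrm{supp}\,\mu_\alpha \boxplus \mu_\beta$. Inside $\mathcal{J}$ this is immediate from Proposition~\ref{le pro links}, which gives $\im \omega_\alpha(E) \asymp \im m(E) = \pi \rho(E)$, so the zero sets of $\im \omega_\alpha$ and $\rho$ agree on $\mathcal{J}$. For $E \in \R \setminus \mathcal{J}$, I note that $E$ lies outside $\mathrm{supp}\,\mu_\alpha \boxplus \mu_\beta$ by Lemma~3.1 of~\cite{Voi86}, hence $m(E) \in \R$; since $m(E) = m_{\mu_\beta}(\omega_\alpha(E))$ (from \eqref{le kkv} and $F = -1/m$) and $m_{\mu_\beta}$ maps $\C^+$ strictly into $\C^+$ and has strictly positive boundary imaginary part $\pi \rho_\beta$ on the interior of $\mathrm{supp}\,\mu_\beta$ (Assumption~\ref{a.regularity of the measures}), the value $\omega_\alpha(E)$ is forced to lie in $\R \setminus \mathrm{int}\,\mathrm{supp}\,\mu_\beta$, so $\im \omega_\alpha(E) = 0$. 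Schwarz reflection then extends $\omega_\alpha - z$ analytically across $\R \setminus \mathrm{supp}\,\mu_\alpha \boxplus \mu_\beta$, yielding \eqref{le sa3} for $\nu_\alpha$, and the argument for $\nu_\beta$ will be symmetric. The hardest part will be the analysis outside $\mathcal{J}$, where Proposition~\ref{le pro links} does not apply; the Herglotz strictness of $m_{\mu_\beta}$ combined with the a.e.\ positivity of $\rho_\beta$ granted by Assumption~\ref{a.regularity of the measures} is what will close this gap.
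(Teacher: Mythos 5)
Your first half is essentially the paper's own argument: the paper also notes that $z\mapsto\omega_\alpha(z)-z$ is a self-map of $\C^+$, applies the Pick representation of Lemma~\ref{lemma pick}, and fixes the parameters by expanding $F_{\mu_\alpha}(\omega_\beta(\ii\eta))$ at infinity exactly as you do via \eqref{le definiting equations} and Lemma~\ref{lemm on hat measures}; your expansion and the identification $\nu_\alpha(\R)=\int x^2\,\dd\mu_\alpha$ are correct (and your $O(\eta^{-2})$ error, while not needed, is justified since $|\omega_\beta(\ii\eta)-\ii\eta|\le\widehat\mu_\beta(\R)/\eta$). For the support identity \eqref{le sa3} your route is a mild variant: the paper transfers the vanishing of imaginary parts between $m$, $m_{\mu_\alpha}\circ\omega_\beta$, $m_{\mu_\beta}\circ\omega_\alpha$ and the subordination functions using \eqref{le to be noted} and Lemma~\ref{gap corollary}, whereas you invoke the comparability $\im\omega_\alpha\asymp\im m$ of Proposition~\ref{le pro links} on $\mathcal{J}$ and then treat $E\in\R\setminus\mathcal{J}$ by a separate Herglotz argument; this is sound, and in fact you are more explicit than the paper about the region outside $\mathcal{J}$, where Lemma~\ref{gap corollary} and Proposition~\ref{le pro links} do not directly apply. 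Three small points to tighten: (a) ``strictly positive precisely on the support'' should be ``the closure of the positivity set of $\im\omega_\alpha$ equals the support'', since $\im\omega_\alpha$ vanishes at the edge points themselves; (b) outside $\mathcal{J}$ you should also rule out $\omega_\alpha(E)=\infty$ (allowed by Proposition~\ref{prop extension}), which would make $\nu_\alpha$ able to carry an atom there; this takes one line, since $E\notin\mathcal{J}$ lies outside the convex hull of $\mathrm{supp}\,\mu_\alpha\boxplus\mu_\beta$, so $m(E)\neq 0$, while $\omega_\alpha(E+\ii\eta)\to\infty$ would force $m(E)=\lim m_{\mu_\beta}(\omega_\alpha(E+\ii\eta))=0$; (c) to exclude $\omega_\alpha(E)\in\mathrm{int}\,\mathrm{supp}\,\mu_\beta$ you need a lower bound on $\im m_{\mu_\beta}(w)$ that is uniform as $w\to w_0$ from $\C^+$ (possibly tangentially), not merely the a.e.\ nontangential boundary value $\pi\rho_\beta$; this does follow from the a.e.\ lower bound on $\rho_\beta$ in Assumption~\ref{a.regularity of the measures}, as you indicate, but it should be stated in that uniform form. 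With these details supplied, your proposal proves the lemma.
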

\begin{rem}
 The measure $\nu_\alpha$ is referred to as the subordination distribution of $\mu_\alpha\boxplus\mu_\beta$ with respect to $\mu_\beta$. Analogously, $\nu_\beta$ is the subordination distribution of $\mu_\alpha\boxplus\mu_\beta$ with respect to~$\mu_\alpha$; see~\cite{Lenc}. Statement~\eqref{le sa2} is a particular simple case  of the results in~\cite{Lenc}; see also Theorem~1.2 in~\cite{Nica}. For completeness we include an elementary proof of~\eqref{le sa2}.
\end{rem}

\begin{proof}[Proof of Lemma~\ref{le nu lemma}]
 The proof is similar to the proof of Lemma~\ref{lemm on hat measures}. We start from $\omega_\alpha(\ii\eta)+\omega_\beta(\ii\eta)-\ii\eta=F_{\mu_\alpha}(\omega_\beta(\ii\eta))$, $\eta>0$. By~\eqref{le limit of omega} we have $\lim_{\eta\nearrow\infty}\omega_\alpha(\ii\eta)/\ii\eta=1$ and we can expand $F_{\mu_\alpha}(\omega_\beta(\ii\eta))$ around infinity similar to~\eqref{used to compare}. On the other 
 hand as $z\mapsto\omega_\alpha(z)-z$ is a self-map of the upper half plane by Proposition~\ref{le prop 1} it admits the representation~\eqref{le pick} with a measure $\nu_\alpha\deq\mu$. By comparison we directly find~\eqref{le sa1} and~\eqref{le sa2} for $\omega_\alpha$. For $\omega_\beta$ these equations are established in the same way.
 
 Finally by~\eqref{le sa1}, the functions $\omega_\alpha(z)$ and $\omega_\beta(z)$ are analytic outside $\mathrm{supp}\,\nu_\alpha$, respectively $\mathrm{supp}\,\nu_\beta$, and $\im \omega_\alpha(E)=0$, for $E\in\R\backslash\mathrm{supp}\,\nu_\alpha$, and $\im\omega_\beta(E)=0$, for $E\in\R\backslash\mathrm{supp}\,\nu_\beta$. We have established in Lemma~\ref{gap corollary} that $\omega_\alpha(z)$ and $\omega_\beta(z)$, $z\in\mathcal{E}$, stay away from the supports of the measures~$\mu_\beta$, respectively~$\mu_\alpha$. Thus $m_{\mu_\alpha}(\omega_\beta(z))$ and $m_{\mu_\beta}(\omega_\alpha(z))$ are analytic outside $\mathrm{supp}\,\nu_\alpha$, respectively $\mathrm{supp}\,\nu_\beta$, with $\im m_{\mu_\alpha}(\omega_\beta(E))=0$, for $E\in\R\backslash\mathrm{supp}\,\nu_\beta$, 
and $\im m_{\mu_\beta}(\omega_\alpha(E))=0$, for $E\in\R\backslash\mathrm{supp}\,\nu_\alpha$. By analytic subordination we have $m(z)=m_{\mu_\alpha}(\omega_\beta(z))=m_{\mu_\beta}(\omega_\alpha(z))$, $z\in\C^+$. Thus, since the subordination functions continuously extend to the real line, we have $\im m(E)=\im m_{\mu_\alpha}(\omega_\beta(E))=\im m_{\mu_\beta}(\omega_\alpha(E))=0$ and we conclude that $\mathrm{supp}\,\nu_\alpha=\mathrm{supp}\,\nu_\beta$ as well as $\mathrm{supp}\,\mu_\alpha\boxplus\mu_\beta\subseteq\mathrm{supp}\,\nu_\alpha$. Finally, let $E\in\R\backslash\mathrm{supp}\,\mu_\alpha\boxplus\mu_\beta$. Then $m(z)$ is analytic in a neighborhood of $E$ and we have $\im m(E)=\im m_{\mu_\alpha}(\omega_\beta(E))=0$. Since by Lemma~\ref{gap corollary} $\omega_\beta(E)$ is outside $\mathrm{supp}\,\mu_\alpha$, we also have $\im\omega_\beta(E)=0$. Recalling~\eqref{le sa1}, we conclude that $E\not\in\mathrm{supp}\,\nu_\beta$.  Thus we have $\mathrm{supp}\,\nu_\alpha\subseteq\mathrm{supp}\,\mu_\alpha\boxplus\mu_\beta$ and we conclude that~\eqref{le sa3} holds.
 \end{proof}
 
 \begin{rem}
  The subordination functions extend continuously to $\R$ and they are real analytic outside the support of $\mu_\alpha\boxplus\mu_\beta$. As $\nu_\alpha$ and $\nu_\beta$ are finite measures by~\eqref{le sa2}, dominated convergence asserts that
  \begin{align}\label{le omegas are increasing}
   \omega'_\alpha(E)-1=\int_\R\frac{\dd\nu_\alpha(x)}{(E-x)^2}\,,\quad\omega_\beta'(E)-1=\int_\R\frac{\dd\nu_\beta(x)}{(E-x)^2}\,,\qquad E\in\R\backslash\mathrm{supp}\mu_\alpha\boxplus\mu_\beta\,.
  \end{align}
In particular, the subordination functions are strictly increasing in $E$ on $\R\backslash\mathrm{supp}\mu_\alpha\boxplus\mu_\beta$. Moreover from~\eqref{le sa1}, we have $\lim_{E\rightarrow\pm\infty}\omega_\alpha(E)=\pm\infty$ and the same holds true for $\omega_\beta$.
 \end{rem}

Having established Proposition~\ref{prop. woises}, we are now ready to determine the support of the free convolution measure. Recall the notation $\mathcal{V}=\partial\{ x\in\R\,:\, \rho(x)>0\}$ from~\eqref{le cal B}. 
\begin{pro}\label{prop. support is an interval}
 
Suppose that $\mu_\alpha$ and $\mu_\beta$ satisfy Assumption~\ref{a.regularity of the measures}. Then there exist finite numbers $E_-<0$ and $E_+>0$ such that $\mathcal{V}=\{E_-,E_+\}$ and 
\begin{align}\label{le support is an interval equation}
 \{x\in\R\,:\,\rho(x)>0\}=(E_-,E_+)\,.
\end{align}
In particular we have $\mathrm{supp}\,\mu_\alpha\boxplus\mu_\beta=[E_-,E_+]$.
\end{pro}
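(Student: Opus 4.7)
The plan is to combine the edge characterization from Proposition~\ref{prop. woises} with a monotonicity analysis of
$$\Phi(E)\deq(F'_{\mu_\alpha}(\omega_\beta(E))-1)(F'_{\mu_\beta}(\omega_\alpha(E))-1)$$
on each connected open component of $\R\setminus\{\rho>0\}$, where $\omega_\alpha(E),\omega_\beta(E)$ take real values. On any such component the equation $F_{\mu_\alpha}(\omega_\beta)=F_{\mu_\beta}(\omega_\alpha)$ from~\eqref{le definiting equations}, together with the elementary sign rules $F_{\mu_\alpha}(\omega)<0$ for $\omega<E_-^\alpha$ and $F_{\mu_\alpha}(\omega)>0$ for $\omega>E_+^\alpha$ (and analogously for $\mu_\beta$), forces a dichotomy: throughout the component either $(-)$ $\omega_\beta(E)<E_-^\alpha$ and $\omega_\alpha(E)<E_-^\beta$, or $(+)$ $\omega_\beta(E)>E_+^\alpha$ and $\omega_\alpha(E)>E_+^\beta$.

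By~\eqref{we can do what we want}, $F''_{\mu_\alpha}(\omega_\beta(E))$ and $F''_{\mu_\beta}(\omega_\alpha(E))$ are both strictly positive in $(-)$ and strictly negative in $(+)$; since $\omega'_\alpha(E),\omega'_\beta(E)>0$ on any gap by~\eqref{le omegas are increasing} and the two factors of $\Phi$ are themselves strictly positive, differentiating the product yields $\Phi'(E)>0$ on every $(-)$-component and $\Phi'(E)<0$ on every $(+)$-component. A bounded component $(a,b)$ would have $a,b\in\mathcal{V}$ and hence $\Phi(a)=\Phi(b)=1$ with $\Phi<1$ in between, contradicting strict monotonicity; thus there are no bounded gaps in $\R\setminus\{\rho>0\}$. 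From~\eqref{le sa1} and dominated convergence one has $\omega_\alpha(E),\omega_\beta(E)\to\pm\infty$ as $E\to\pm\infty$, so $\Phi(E)\to 0$ at $\pm\infty$; combined with $\Phi=1$ at the finite endpoints, strict monotonicity on the unique left $(-)$-gap and right $(+)$-gap pins down exactly two points $E_-<E_+$ in $\mathcal{V}$.

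It remains to show $\{\rho>0\}=(E_-,E_+)$ by ruling out interior isolated zeros. At any $x_0\in\mathcal{V}$ the Jacobian of the system~\eqref{le definiting equations} with respect to $(\omega_\alpha,\omega_\beta)$ has determinant $1-\Phi(x_0)=0$, so a standard bifurcation argument produces a Puiseux expansion $\omega_\beta(z)-\omega_\beta(x_0)=c_\beta\sqrt{z-x_0}+O(z-x_0)$ near $x_0$ with $c_\beta\neq 0$. Matching the $(z-x_0)$-coefficients in~\eqref{le definiting equations} (the $\sqrt{z-x_0}$-coefficients cancel automatically because $\Phi(x_0)=1$) yields
$$c_\beta^2\bigl[B^0 F''_{\mu_\alpha}(\omega_\beta(x_0))+(A^0)^2 F''_{\mu_\beta}(\omega_\alpha(x_0))\bigr]=-2(1+B^0),$$
with $A^0=F'_{\mu_\alpha}(\omega_\beta(x_0))-1$ and $B^0=F'_{\mu_\beta}(\omega_\alpha(x_0))-1$. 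By the dichotomy the bracket is positive in $(-)$ and negative in $(+)$, so $c_\beta^2$ is a nonzero real number; the Nevanlinna constraint $\im\omega_\beta\ge 0$ then forces $c_\beta$ to be purely imaginary in $(-)$ and real positive in $(+)$. In either case $\omega_\beta$ is real-valued on exactly one side of $x_0$, so $\rho$ vanishes on that side, contradicting the supposed isolatedness of $x_0$.

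Finally, centering of $\mu_\alpha$ and $\mu_\beta$ implies that $\mu_\alpha\boxplus\mu_\beta$ has first moment zero, so a nontrivial absolutely continuous probability measure on $[E_-,E_+]$ with $\int E\rho(E)\,\dd E=0$ must have $E_-<0<E_+$. The principal obstacle is the last technical step: one must justify that the true leading deviation of $\omega_\beta$ from $\omega_\beta(x_0)$ at any $x_0\in\mathcal{V}$ is genuinely of square-root type (ruling out analytic or higher-order fractional behavior) and extract the sign of $c_\beta^2$ from the second-order compatibility condition.
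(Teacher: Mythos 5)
Your overall architecture is sound and rests on the same core ingredients as the paper: the characterization of $\mathcal{V}$ from Proposition~\ref{prop. woises}, the sign dichotomy forced by $F_{\mu_\alpha}(\omega_\beta(E))=F_{\mu_\beta}(\omega_\alpha(E))$ together with the gap bound of Lemma~\ref{gap corollary} (this is exactly how the paper excludes the mixed configurations~\eqref{le condi1}--\eqref{le condi2}), and monotonicity coming from $\omega_\alpha',\omega_\beta'>0$ off the support and the fixed signs of $F''$ in~\eqref{we can do what we want}. Where you genuinely differ is in the organization: the paper works globally with $f(E)=\widehat I_\alpha(\omega_\beta(E))\widehat I_\beta(\omega_\alpha(E))$, runs the monotonicity from $\pm\infty$ to get a unique ``both below'' and a unique ``both above'' solution, and kills everything else by the sign argument, so that $\mathcal{V}=\{E_-,E_+\}$ comes out in one stroke; you instead first exclude nondegenerate interior zero-intervals by monotonicity of $\Phi$ between two points of $\mathcal{V}$, and then attack interior (isolated) zeros by a local square-root analysis. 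Note also that your intermediate claim that monotonicity on the two unbounded gaps ``pins down exactly two points in $\mathcal{V}$'' is an overstatement at that stage -- it only identifies the extreme points of $\mathcal{V}$ -- but you correct this yourself by treating interior zeros afterwards. The centering argument for $E_-<0<E_+$ is fine.

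The step you flag as ``the principal obstacle'' is indeed the only real gap as written, but it does not fail: it is precisely the device of the paper's Proposition~\ref{prop. no cusps}, applied at a hypothetical interior point $x_0\in\mathcal{V}$ instead of at $E_\pm$. Since by your dichotomy $\omega_\beta(x_0)$ and $\omega_\alpha(x_0)$ are real, lie on the same side of, and are separated from, $\mathrm{supp}\,\mu_\alpha$ resp.\ $\mathrm{supp}\,\mu_\beta$, the function $\widetilde z(\omega)=-F_{\mu_\alpha}(\omega)+\omega+F^{(-1)}_{\mu_\beta}\circ F_{\mu_\alpha}(\omega)$ of~\eqref{le z} is real-analytic near $\omega_\beta(x_0)$ with $\widetilde z'(\omega_\beta(x_0))=0$ (because $\Phi(x_0)=1$) and, by the computation~\eqref{jasz}, $\widetilde z''(\omega_\beta(x_0))$ is nonzero with sign determined by the case $(-)$ or $(+)$; inverting $\widetilde z$ near this nondegenerate critical point gives the convergent square-root Puiseux expansion with a uniformizing series having \emph{real} coefficients, which yields exactly your identity for $c_\beta^2$ (your bracket agrees with~\eqref{jasz} after using $A^0B^0=1$) and, more importantly, shows that $\omega_\beta$ is exactly real on one side of $x_0$, so $\rho\equiv 0$ on a one-sided interval there -- contradicting your first step. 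So the missing piece is supplied by an argument already in the paper; what your route buys, once this is filled in, is a completely local treatment of interior zeros, whereas the paper's global monotonicity of $f$ avoids any local expansion but leans on comparing values of $f$ at points that are not connected by a region where the subordination functions are real.
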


\begin{proof}[Proof of Proposition~\ref{prop. support is an interval}]
 From Proposition~\ref{prop. woises} we know that a point $E$ belongs to $\mathcal{V}$ if and only if $(F'_{\mu_\alpha}(\omega_\beta(E))-1)(F'_{\mu_\beta}(\omega_\alpha(E))-1)=1$. Using~\eqref{we can do what we want}, we rewrite this condition as 
 \begin{align}\label{le find the zeros}
  \widehat I_\alpha(\omega_\beta(E))\widehat I_\beta(\omega_\alpha(E))=\int_\R\frac{\dd\widehat\mu_\alpha(y)}{(y-\omega_\beta(E))^2}\int_\R\frac{\dd\widehat\mu_\beta(y)}{(y-\omega_\alpha(E))^2}=1\,.
 \end{align}
 
We will now look for solutions to~\eqref{le find the zeros} for $E\in\mathcal{J}$. For ease of notation set
\begin{align}
 f(E)\deq \int_\R\frac{\dd\widehat\mu_\alpha(y)}{(y-\omega_\beta(E))^2}\int_\R\frac{\dd\widehat\mu_\beta(y)}{(y-\omega_\alpha(E))^2}\,,\qquad\qquad E\in\R\,.
 \end{align}
By~\eqref{le omegas are increasing} we know that the subordination functions are strictly increasing outside the support of $\mu_\alpha\boxplus\mu_\beta$ and we also know from~\eqref{le sa1} that $\omega_\alpha(E)=E + o(1)$ and $\omega_\beta(E)=E + o(1)$, as $E\searrow-\infty$. Hence $f(E)=o(1)$ as $E\searrow -\infty$. We now increase $E$ starting from $-\infty$ and note that $f(E)$ is  monotone increasing, as $\omega_\alpha(E)$ and $\omega_\beta(E)$ are.

By~\eqref{le equal 1}, we know that $|f(E)|\le 1$, for all $E\in\R$. Yet, we also know from the paragraph below~\eqref{le diverging2} in the proof of Lemma~\ref{gap corollary} that $\widehat I_\alpha(\omega)$ and $\widehat I_\beta(\omega)$ both diverge when $\omega$ approaches the lower endpoints of the measures $\mu_\alpha$, respectively $\mu_\beta$.  We therefore conclude by monotonicity 
of $f$ that there is only one solution, $E_-$, to~\eqref{le find the zeros} such that $\omega_\alpha(E_-)<E_-^\beta$ and $\omega_\beta(E_-)<E_-^\alpha$.  
The point $E_-$ is the first point from the left reaching  the support of $\mu_\alpha\boxplus\mu_\beta$, \ie it is the leftmost endpoint,
 and we therefore conclude that $E_-\in\mathcal{J}$ as $\mathrm{supp}\,\mu_\alpha\boxplus\mu_\beta\subset\mathcal{J}$;
 \cf remark below~\eqref{le domain}.

The same reasoning, reducing $E$  from $\infty$, shows that there is only one solution, $E_+$, to~\eqref{le find the zeros} such that $\omega_\alpha(E_+)>E_+^\beta$ and $\omega_\alpha(E_+)>E_-^\alpha$. Moreover, $E_+$ must be the right most endpoint of the support of $\mu_\alpha\boxplus\mu_\beta$ and hence $E_+\in\mathcal{J}$.

Any other solution, $E'$, to~\eqref{le find the zeros} must lie in $(E_-,E_+)$ and has to either satisfy 
\begin{align}\label{le condi1}
\omega_\alpha(E')<E_-^\beta\,\quad\textrm{and}\quad\omega_\beta(E')>E_+^\alpha\,,
\end{align}
or
\begin{align}\label{le condi2}
\omega_\alpha(E')>E_+^\beta\,\quad\textrm{and}\quad\omega_\beta(E')<E_-^\alpha\,. 
\end{align}
Yet, we now show that this cannot happen for $\omega_\alpha$ and $\omega_\beta$ solutions to the subordination equations~\eqref{le definiting equations}. 

Arguing by contradiction, assume there is a solution, $E'$, to~\eqref{le find the zeros} with $E_-<E'<E_+$. Then $E'$ satisfies either~\eqref{le condi1} or~\eqref{le condi2}. Assume first that~\eqref{le condi1} is satisfied. Since~$\mu_\beta$ is supported on a single interval, we must have $F_{\mu_\beta}(\omega_\alpha(E'))<0$. This can for example be seen from the representation
\begin{align}
 \frac{-1}{F_{\mu_\beta}(\omega)}=m_{\mu_\beta}(\omega)=\int_{E_-^\beta}^{E_+^\beta}\frac{\dd\mu_\beta(x)}{x-\omega}\,,\qquad\omega\in\R\,,
\end{align}
and the observation that $m_{\mu_\beta}(\omega)$ is strictly positive on $(-\infty,E_-^\beta)$.
 But, on the other hand, since $E_+^\alpha<\omega_\beta(E')$, we must have
$0<F_{\mu_\alpha}(\omega_\beta(E'))$ since $\mu_\alpha$ is supported on a single interval. Hence, we must have
\begin{align}\label{ywc}
  F_{\mu_\beta}(\omega_\alpha(E'))<0<F_{\mu_\alpha}(\omega_\beta(E'))\,.
\end{align}
However, by subordination we have $F_{\mu_\beta}(\omega_\alpha(E))=F_{\mu_\alpha}(\omega_\beta(E))$, for all $E\in\R$, contradicting~\eqref{ywc}. We therefore conclude that there is no solution $E'$ to~\eqref{le find the zeros} satisfying~\eqref{le condi1}.

The same argument shows that there cannot be a solution $E'$ to~\eqref{le find the zeros} satisfying~\eqref{le condi2}. Hence the only solutions to~\eqref{le find the zeros} are $E_-$ and $E_+$. We thus have $\mathcal{V}=\{E_-,E_+\}$, so that $\{x\in\R\,:\,\rho(x)>0\}=(E_-,E_+)$. In particular, $\mathrm{supp}\,\mu_\alpha\boxplus\mu_\beta=[E_-,E_+]$ and there are no isolated zeros. This concludes the proof of Proposition~\ref{prop. support is an interval}.
\end{proof}

\begin{pro}\label{prop. no cusps}
Suppose that $\mu_\alpha$ and $\mu_\beta$ satisfy Assumption~\ref{a.regularity of the measures}, in particular the measures $\mu_\alpha$ and $\mu_\beta$ are both supported on a single interval. Let the single interval support of $\mu_\alpha\boxplus \mu_\beta$ be denoted by $[E_-, E_+]$ as  in  Proposition~\ref{prop. support is an interval}. Then there are strictly positive constants $\gamma_-^\beta$ and $\gamma_+^\beta$ such that 
\begin{align}\label{le lower edge srt}
 \omega_\beta(z)=\omega_\beta(E_-)+\gamma_-^\beta\sqrt{E_--z}+O(|z-E_-|)\,,
\end{align}
 for $z$ in a neighborhood of $E_-$, where we choose the square root such that $\im\omega_\beta(x)>0$, $x> E_-$. Similarly, we have
 \begin{align}\label{le upper edge srt}
 \omega_\beta(z)=\omega_\beta(E_+)+\gamma_+^\beta\sqrt{z-E_+}+O(|z-E_+|)\,,
\end{align}
for $z$ in a neighborhood of $E_+$, where we choose the square root such that $\im\omega_\beta(x)>0$, $x<E_+$. The same conclusions apply to $\omega_\alpha$ with strictly positive constants $\gamma_-^\alpha$ and $\gamma_+^\alpha$.
\end{pro}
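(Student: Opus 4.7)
The plan is to extract the square-root behaviour from a local Taylor expansion of the subordination system \eqref{le definiting equations} around the edge, using that Proposition~\ref{prop. woises} identifies the edges precisely as the points where the linearization of \eqref{le definiting equations} is degenerate. I will describe the argument for $\omega_\beta$ at $E_+$; the other three statements follow by identical reasoning with obvious sign adaptations and by exchanging the roles of $\alpha$ and $\beta$.

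Set $\omega_\alpha^0\deq\omega_\alpha(E_+)$ and $\omega_\beta^0\deq\omega_\beta(E_+)$. The proof of Proposition~\ref{prop. support is an interval} establishes $\omega_\beta^0 > E_+^\alpha$ and $\omega_\alpha^0 > E_+^\beta$, so by \eqref{it is Stieltjes transform} both $F_{\mu_\alpha}$ and $F_{\mu_\beta}$ are real-analytic in open neighbourhoods of $\omega_\beta^0$ and $\omega_\alpha^0$. Writing $z=E_+ + \zeta$, $a\deq\omega_\alpha(z)-\omega_\alpha^0$, $b\deq\omega_\beta(z)-\omega_\beta^0$, and introducing $p\deq F'_{\mu_\alpha}(\omega_\beta^0)$, $q\deq F'_{\mu_\beta}(\omega_\alpha^0)$, $P\deq F''_{\mu_\alpha}(\omega_\beta^0)$, $Q\deq F''_{\mu_\beta}(\omega_\alpha^0)$, I would Taylor-expand \eqref{le definiting equations} to second order and obtain the pair
\[
a + (1-p)b = \zeta + \tfrac{1}{2}P b^2 + O(|b|^3),\qquad qa - pb = \tfrac{1}{2}P b^2 - \tfrac{1}{2}Q a^2 + O(|a|^3+|b|^3).
\]
The $2\times 2$ coefficient matrix has determinant $pq-p-q$, which vanishes precisely because $(p-1)(q-1)=1$ by Proposition~\ref{prop. woises}. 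Its right kernel is spanned by $(p-1,1)$ and its left kernel by $(-q,1)$.

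Next I would set $a=(p-1)b + c$ with $c=O(b^2)$, solve the first equation for $c$, and project the second onto the left kernel $(-q,1)$ so as to kill the first-order contributions. After routine bookkeeping this yields the scalar relation
\[
\zeta \;=\; -\,\frac{(q-1)P + (p-1)^2 Q}{2q}\, b^2 \;+\; O(|b|^3).
\]
From \eqref{we can do what we want} one has $p, q > 1$, and since $\omega_\beta^0$ lies strictly to the right of $\mathrm{supp}\,\widehat\mu_\alpha = \mathrm{supp}\,\mu_\alpha$ (by \eqref{same support}) and similarly for $\omega_\alpha^0$, the integrands defining $P$ and $Q$ are strictly negative on those supports, giving $P, Q < 0$. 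Hence the coefficient $K_+\deq-((q-1)P+(p-1)^2Q)/(2q)$ is strictly positive, so $b^2 = \zeta/K_+ + O(|\zeta|^{3/2})$ after the standard bootstrap that feeds the a priori bound $b=O(|\zeta|^{1/2})$ back into the cubic remainder. The branch of the square root is then pinned down by $\omega_\beta(\C^+)\subset\C^+$ together with the continuity of $\omega_\beta$ on $\C^+\cup\R$ (Proposition~\ref{prop extension}), singling out $\gamma_+^\beta \deq 1/\sqrt{K_+} > 0$. The analysis at $E_-$ goes through identically with $\omega_\alpha^0, \omega_\beta^0$ now to the left of the corresponding supports, so that $P, Q > 0$ and the coefficient flips sign, yielding the stated $\sqrt{E_- - z}$ behaviour with positive $\gamma_-^\beta$.

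I expect the main obstacle to be the second step: carrying out the left-kernel projection cleanly and verifying that the quadratic coefficient is strictly nonzero. The sign analysis of $P$ and $Q$, which in turn rests on the single-interval supports of $\mu_\alpha$, $\mu_\beta$ and on the location of the edges established in Proposition~\ref{prop. support is an interval}, is precisely what rules out an accidental higher-order degeneracy and guarantees the strict positivity of $\gamma_\pm^\beta$.
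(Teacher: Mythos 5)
Your proof is correct and follows the same strategy as the paper's: a second-order Taylor expansion of the subordination system at the edge, where the linear term is degenerate precisely because $(F'_{\mu_\alpha}(\omega_\beta(E_\pm))-1)(F'_{\mu_\beta}(\omega_\alpha(E_\pm))-1)=1$ (Propositions~\ref{prop. woises} and~\ref{prop. support is an interval}), and where the quadratic coefficient has a strict sign because $\omega_\alpha(E_\pm)$ and $\omega_\beta(E_\pm)$ lie strictly outside $\mathrm{supp}\,\mu_\beta$, respectively $\mathrm{supp}\,\mu_\alpha$ (Lemma~\ref{gap corollary}), so that \eqref{we can do what we want} gives $F'-1>0$ and $F''$ of definite sign there. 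The only genuine difference is how the degenerate system is reduced to a scalar quadratic relation: the paper eliminates $\omega_\alpha$ by composing with the local inverse $F_{\mu_\beta}^{(-1)}$ (legitimate by the analytic inverse function theorem since $F'_{\mu_\beta}\neq 0$ at $\omega_\alpha(E_\pm)$) and expands the single function $\widetilde z(\omega)$ of \eqref{le z}, with $\widetilde z'(\omega_\beta(E_\pm))=0$ and $\widetilde z''(\omega_\beta(E_\pm))\neq 0$, whereas you keep both equations and perform a Lyapunov--Schmidt-type reduction, projecting onto the left kernel $(-q,1)$ of the degenerate $2\times 2$ linearization. The two reductions yield the same coefficient: using $(p-1)(q-1)=1$, equivalently $p=q/(q-1)$, one checks $p^2/q^3=(p-1)^2/q$, so your relation $\zeta=-\bigl((q-1)P+(p-1)^2Q\bigr)b^2/(2q)+O(|b|^3)$ is exactly the paper's $\zeta=\tfrac12\widetilde z''(\omega_\beta(E_\pm))\,b^2+O(|b|^3)$ coming from \eqref{jasz}. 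Your route avoids the functional inverse at the price of a little more bookkeeping (in the projection step one should note $a-(p-1)b=O(|\zeta|+|b|^2)$, which is harmless after the bootstrap), while the paper's route reduces everything to inverting one scalar analytic function with vanishing first derivative; your branch selection via $\omega_\beta(\C^+)\subset\C^+$ and continuity (Proposition~\ref{prop extension}) and your sign analysis ($P,Q<0$ at $E_+$, $P,Q>0$ at $E_-$) correspond exactly to the paper's choice of square root in \eqref{kind nervt} and its bound \eqref{flugi} on $\widetilde z''$.
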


\begin{rem}
 
 The proof of Proposition~\ref{prop. no cusps} follows a similar strategy as the proof of Lemma~3.7 in~\cite{BES17}. Theorem~\ref{le theorem 1} will be a direct consequence of Proposition~\ref{prop. no cusps} and the subordination equations.
\end{rem}

\begin{proof}[Proof of Proposition~\ref{prop. no cusps} ]
We focus on the lower edge $E_-$ and prove~\eqref{le lower edge srt}. Equation~\eqref{le upper edge srt} is proved in the analogous way. We start by rewriting the subordination equation~\eqref{le definiting equations} in the form of a fixed point equation. Using Lemma~\ref{gap corollary}, and the fact that $|F'_{\mu_\beta}(\omega)|>0$, $\omega\in\R\backslash\mathrm{supp}\,\mu_\beta$, as follows from~\eqref{we can do what we want}, we conclude by the analytic inverse function theorem that the functional inverse  $F^{(-1)}_{\mu_\beta}$ of $F_{\mu_\beta}$ is analytic in a neighborhood of $F_{\mu_\beta}(\omega_\alpha(E_-))$. Thus the function
\begin{align}\label{le z}
 \widetilde z(\omega)\deq -F_{\mu_\alpha}(\omega)+\omega+F^{(-1)}_{\mu_\beta}\circ F_{\mu_\alpha}(\omega)\,,
\end{align}
is well-defined and analytic in a neighborhood of $\omega_\beta(E_-)$. It further follows from~\eqref{le definiting equations} that $\omega_\beta(z)$ is a solution $\omega=\omega_\beta(z)$ to the equation $z=\widetilde z(\omega)$ (with $\im \omega_\beta(z)\ge \im z$). Moreover, we have $\omega_\alpha(z)=F^{(-1)}_{\mu_\beta}\circ F_{\mu_\alpha}(\omega_\beta(z))$.

As argued in the proof of Proposition~\ref{prop. support is an interval}, the lower edge $E_-$ satisfies
\begin{align}\label{the new guy}
 (F'_{\mu_\alpha}(\omega_\beta(E'))-1)(F'_{\mu_\beta}(\omega_\alpha(E'))-1)=1\,,
\end{align}
as well as $\omega_\alpha(E_-)<E_-^\beta$ and $\omega_\beta(E_-)<E_-^\alpha$. The function $\widetilde{z}(\omega)$ can then be analytically continued to a neighborhood of $\omega_\beta(E_-)$ with Taylor expansion
\begin{multline}\label{le Taylor expansion}
\widetilde{z}(\omega)=E_-+\widetilde z'(\omega_\beta(E_-))(\omega-\omega_\beta(E_-))+\frac{1}{2}\widetilde z''(\omega_\beta(E_-))(\omega-\omega_\beta(E_-))^2\\  +O\left((\omega-\omega_\beta(E_-))^3\right)\,.
\end{multline}
We compute from~\eqref{le z} that
\begin{align}\label{le z prime}
 \widetilde z'(\omega)=-F'_{\mu_\alpha}(\omega)+1+\frac{1}{F'_{\mu_\beta}\circ F_{\mu_\beta}^{(-1)}\circ F_{\mu_\alpha}(\omega)}F'_{\mu_\alpha}(\omega)\,.
\end{align}
It is straightforward to check that $\widetilde z'(\omega_\beta(E_-))=0$ as $E_-$ is a solution to~\eqref{the new guy}. Yet, we claim that $\widetilde z''(\omega_\beta(E'))<0$. From~\eqref{le z prime} we compute,
\begin{align*}
 \widetilde z''(\omega)&=-F''_{\mu_\alpha}(\omega)+\frac{1}{F'_{\mu_\beta}\circ F_{\mu_\beta}^{(-1)}\circ F_{\mu_\alpha}(\omega)}F''_{\mu_\alpha}(\omega)\nonumber\\&\qquad-\frac{1}{(F'_{\mu_\beta}\circ F_{\mu_\beta}^{(-1)}\circ F_{\mu_\alpha}(\omega))^3}\left(F''_{\mu_\beta}\circ F_{\mu_\beta}^{(-1)}\circ F_{\mu_\alpha}(\omega)\right)\cdot( F'_{\mu_\alpha}(\omega))^2\nonumber\,,
\end{align*}
and thus by choosing $\omega=\omega_\beta(z)$, we get
\begin{align}\label{jasz}
 \widetilde z''(\omega_\beta(z))=-\frac{F''_{\mu_\alpha}(\omega_\beta(z))}{F'_{\mu_\beta}(\omega_\alpha(z))}(F'_{\mu_\beta}(\omega_\alpha(z))-1)-\frac{F''_{\mu_\beta}(\omega_\alpha(z))}{(F'_{\mu_\beta}(\omega_\alpha(z)))^3}( F'_{\mu_\alpha}(\omega_\beta(z)))^2\,.
\end{align}
Next, recall~\eqref{we can do what we want} and that $\omega_\alpha(E_-)<E_-^\beta$ as well as $\omega_\beta(E_-)<E_-^\alpha$. Thus
\begin{align}
 F'_{\mu_\beta}(\omega_\alpha(E_-))>1\,,\quad F'_{\mu_\alpha}(\omega_\beta(E_-))>1\,,\quad F''_{\mu_\alpha}(\omega_\beta(E_-))>0\,, \quad F''_{\mu_\beta}(\omega_\alpha(E_-))>0\,,
\end{align}
which implies upon choosing $z=E_-$ in~\eqref{jasz} that
\begin{align}\label{flugi}
 c\le-\widetilde z''(\omega_\beta(E_-))\le C\,. 
\end{align}
Hence choosing $\omega=\omega_\beta(z)$ in the Taylor expansion of $\widetilde z(\omega)$ in~\eqref{le Taylor expansion} 
 (thus $\widetilde z(\omega_\beta(z))=z)$
and using  $\widetilde z'(\omega_\beta(E_-))=0$, $\widetilde z''(\omega_\beta(E_-))\not=0$, we get
\begin{align}\label{kind nervt}
 \omega_\beta(z)=\omega_\beta(E_-)+\sqrt{\frac{-2}{z''(\omega_\beta(E_-))}}\sqrt{E_--z}+O(|E_--z|)\,,
\end{align}
 for $z$ in a neighborhood of $E_-$, where we choose the square root such that $\im\omega_\beta(x)>0$, $x> E_-$. Choosing $\gamma_-^\beta\deq(-2/z''(\omega_\beta(E_-)))^{1/2}$, we obtain~\eqref{le lower edge srt}.
 \end{proof}
Finally, we complete the proof of Theorem~\ref{le theorem 1}.
\begin{proof}[Proof of Theorem~\ref{le theorem 1}]
 It suffices to recall from~\eqref{le to be noted} that 
 \begin{align}
  \im m_{\mu_\alpha\boxplus\mu_\beta}(z)=\im\omega_\beta(z)\cdot I_{\alpha}(\omega_\beta(z))\,,\qquad z\in\C^+\,,
 \end{align}
 and that $I_{\alpha}(\omega_\beta(z))$ is, by~\eqref{det forsta} and~\eqref{gap bounds everything}, uniformly bounded from below and above for all $z\in\mathcal{E}$.
Theorem~\ref{le theorem 1} now directly follows from Proposition~\ref{prop. no cusps} and the Stieltjes inversion formula.
\end{proof}

\end{document}